\newcommand{\hu}{\tilde{u}}
\newcommand{\hv}{\tilde{v}}
\begin{document}

\title*{Hamiltonian field theory close to the wave equation: from Fermi-Pasta-Ulam to water waves}
\titlerunning{Hamiltonian field theory: from FPU to water waves}
\author{Matteo Gallone and Antonio Ponno}
\institute{Matteo Gallone \at Dipartimento di Matematica ``F. Enriques'', Università degli Studi di Milano, via Cesare Saldini 50, 20133 Milano, Italy, \email{matteo.gallone@unimi.it}
\and Antonio Ponno \at Dipartimento di Matematica ``T. Levi-Civita'', Università degli Studi di Padova, via Trieste 63, 35121 Padova, Italy,  \email{ponno@math.unipd.it}
}
%
%
\maketitle

\abstract{In the present work we analyse the structure of the Hamiltonian field theory in the neighbourhood of the wave equation $q_{tt}=q_{xx}$. We show that, restricting to ``graded'' polynomial perturbations in $q_x$, $p$ and their space derivatives of higher order, the local field theory is equivalent, in the sense of the Hamiltonian normal form, to that of the Korteweg-de Vries hierarchy of second order. Within this framework, we explain the connection between the theory of water waves and the Fermi-Pasta-Ulam system.}

\section{Introduction}

The present work aims to treat the perturbations of a linear string in the framework of classical Hamiltonian field theory. The unperturbed base model we have in mind, the linear string, is described by the one-dimensional wave equation
\begin{equation}
\label{eq:kg}
q_{tt}\;=\;c^2q_{xx} \, ,
\end{equation}  
where $q:\mathbb{R}\times D\to\mathbb{R}:(t,x)\to q(t,x)$ is the unknown, real-valued field, and $c$ is a real, positive parameter, the \emph{speed of the wave}. As usual, partial derivatives are denoted by subscripts, i.e. $q_t=\partial_t q$, $q_x=\partial_x q$, and so on. Concerning the space domain $D$ and the boundary conditions of the field $q$, we here focus on the $1$-periodic case, namely $D=\mathbb{T}:=\mathbb{R}/\mathbb{Z}$ (the $L$-periodic case, with $D=\mathbb{R}/(L\mathbb{Z})$, can always be reduced to the case $L=1$ by rescaling both the independent variables to $x'= x/L$, $t'=t/L$). 
  
Solving equation (\ref{eq:kg}), for any initial condition
$q(0,x)$, $q_t(0,x)$ defined on $\mathbb{T}$ and regular enough, is a standard exercise in Fourier analysis. Indeed, substituting
$q(t,x)=\sum_{k\in\mathbb{Z}}\hat{q}_k(t)e^{\imath2\pi k x}$ ($\imath$ is the imaginary unit) into (\ref{eq:kg}), one gets
\[
\frac{d^2\hat{q}_k}{dt^2}=-4\pi^2c^2k^2 \, \hat{q}_k\ ,
\]
which implies $\hat{q}_k(t)=a_ke^{\imath\omega_kt}+\bar{a}_{-k}e^{-\imath\omega_kt}$, where the $a_k$ are complex constants (the bar denoting complex conjugation), and
\begin{equation}
\label{eq:omk}
\omega_k:=2\pi c |k|\ ;\ \ k\in\mathbb{Z}\ .
\end{equation}
Observe that $\omega_{-k}=\omega_k$, which implies $\overline{\hat{q}}_k=\hat{q}_{-k}$, i.e. $q$ is real.
Relation (\ref{eq:omk}) defines the dispersion relation of the wave equation. A given space periodic system,
characterised by a certain dispersion relation $k\to\omega_k$ is said to be \emph{non dispersive} if $\omega_{k+1}-\omega_k$ is piecewise constant, i.e. if $\omega_k$ is piecewise linear in $k$ and this is clearly the case for the wave equation. One can check that the solution $q(t,x)$ of the problem is time periodic for all initial conditions, the period being $2\pi/\omega_1=1/c$.

It is almost impossible to give a complete account of physical phenomena that, to the first linear approximation, are described by the wave equation. Let us just mention, to have in mind concrete examples that we are going to analyse later, wave propagation in fluids and long-wavelength vibrations of interacting particle chains. In all these problems, the need to go beyond the first approximation arises, in order to take into account the effects of both nonlinearity and dispersion, typically determining whether some interesting form of energy localisation may take place, as opposed to a fast energy spreading among the degrees of freedom of the system. One is thus led to look for a general treatment of the possible perturbations of equation (\ref{eq:kg}) regardless of the specific physical problem giving rise to it. This in turn calls for the restriction to a mathematical context where the possible perturbations constitute a well-defined ordered class of objects. We do this within the framework of Hamiltonian field theory, at the price to exclude, among others, all the dissipative effects from the theory (no claim is made here about their irrelevance: the other way around. See for example the enlightening discussion made by Nekhoroshev in \cite{Nek}). 
Moreover, we consider nonlinear and dispersive perturbations depending on $q_x$, $p$ and their higher order derivatives, but not on $q$. Indeed, all systems made of interacting particles, such as solids, fluids and gasses, in absence of external forces, and on a sufficiently large space scale, are described by a certain wave equation at the linear level, with perturbations depending, in principle, only by the space derivatives of the field (and its momentum, possibly). This is due to the fact that interactions in matter depend on \emph{differences} of coordinates, which in the continuum approximation correspond to \emph{derivatives}. 

On the other hand, considering smooth perturbations of the wave equation depending on $q$ (not only through derivatives) would be interesting as well. For example, as shown by Bambusi and Nekhoroshev and by Nekhoroshev \cite{Nek,BamNek1,BamNek2}, the smooth perturbations of the wave equation depending on $q$ only (no derivatives) give rise to very nice, long-lasting localisation phenomena. Whether be possible to include such a class of problems in our treatment, drawing meaningful conclusions, looks unclear, at present. 

Although we decided to focus on one-dimensional systems, it is worth mentioning that the techniques presented here can be generalised to study problems in higher space dimension. In this case one can predict, for example, energy localisation for a certain class of anisotropic rectangular lattices \cite{GaPa}.

\medskip

The paper is organised as follows. In Section 2 we introduce the Hamiltonian formalism of classical field theory, at the end of which we provide an informal presentation of the main results. Section 3 contains the elements of perturbation theory framed in the more general context of Poisson systems, which is the one appropriate to our purposes. Section 4 contains the formal statements and proofs of the results. The application of such results to the FPU problem and to the water wave problem is treated in Section 5. Finally, a short list of open problems is provided in Section 6.

\section{Outline of the method and results}

\subsection{Hamiltonian field theory}

For the sake of completeness, we report here a short review on what is meant by \emph{Hamiltonian field theory}. The reader is referred to the monographs \cite{DNF}, \cite{GelFom}, and \cite{MaRa}, for  details and/or a more extensive treatment of the subject.
 
In Hamiltonian field theory the dynamical variables (e.g. coordinates and conjugate momenta) are points in a certain function space, the phase space of the system, and the observables, including the Hamiltonian, are functionals, admitting a density, defined on the phase space. 

In order to specify the notations used below, let us first consider the space of smooth functions, or fields $u:\mathbb{T}\to\mathbb{R}$. A functional $F[u]$, with density $\mathcal{F}$ depending on $x$ and on $u(x)$ and its derivatives up to a given order, is defined as
\begin{equation}
	F[u]=\oint \mathcal{F}(x,u,u_x,u_{xx},\dots) \, dx\ ,
\end{equation}
where here and in the sequel we make use of the short hand notation $\oint:=\int_{\mathbb{T}}$. The functional derivative (or variational derivative) of $F$ with respect to $u$, denoted by $\delta F/\delta u$, is defined by the relation
\begin{equation}
\label{eq:fddef}
\delta{F}[u,\delta u]:=\frac{d}{d\epsilon}F[u+\epsilon \delta u]\big|_{\epsilon=0}=\oint \frac{\delta F}{\delta u}\delta u\ dx\ ,
\end{equation}
for any smooth finite increment $\delta u$ defined on $\mathbb{T}$.  Through repeated integrations by parts and erasing the boundary terms one finds 
\begin{equation}
\label{eq:dFdu}
\frac{\delta F}{\delta u}=\sum_{j\geq0} (-1)^j \frac{d^j}{dx^j} \frac{\partial \mathcal{F}}{\partial (\partial_x^ju)}=
\frac{\partial \mathcal{F}}{\partial u}-\frac{d}{dx}\frac{\partial \mathcal{F}}{\partial u_x}+
\frac{d^2}{dx^2}\frac{\partial \mathcal{F}}{\partial u_{xx}}+\cdots,
\end{equation}
the sum above being finite if $\mathcal{F}$ is a polynomial in $u$ and its derivatives up to a given finite order (as will be in our case). 
Relation (\ref{eq:fddef}) defines the Gateaux, or weak differential of the functional $F$ at $u$ with increment $\delta u$, which under further requirements coincides with the Fr\'echet, or strong differential of $F$; see e.g. \cite{Zorn}. 
The functional derivative is also referred to, in the mathematical literature, as the $L_2$-gradient of $F$ with respect to $u$. Indeed, in the Hilbert space $L_2(\mathbb{T})$ of square integrable functions on 
$\mathbb{T}$, endowed with the usual scalar product $\langle f,g\rangle:=\oint fg\ dx$, one can rewrite (\ref{eq:fddef}) as $\delta F=\langle \delta F/\delta u,\delta u\rangle:=\langle\nabla F,\delta u\rangle$, identical in form to its finite-dimensional counterpart.

In the \emph{Hamiltonian field theory} considered in the present paper, the phase space $\Gamma$ of the system is the space of two components, smooth, real-valued fields $(q(x),p(x))$ defined on $\mathbb{T}$. The observables of the theory are the functionals $F:\Gamma\to\mathbb{R}$ admitting a density $\mathcal{F}$ which is a polynomial in $q(x)$, $p(x)$ and their space derivatives up to a finite order, with coefficients possibly depending on $x$.
One then selects, among the observables, the Hamiltonian defining the given system, namely 
\begin{equation}
\label{eq:Hgen}
H[q,p]:=\oint\mathcal{H}(x,q,p,q_x,p_x,\dots)\ dx\ .
\end{equation}
The motion of the system, a certain curve $\gamma:[t_1,t_2]\ni t\mapsto(q,p)(t)\in\Gamma$, is then specified by a stationary action principle, as in the finite dimensional case. Indeed, defining the \emph{action functional} $S[q,p]$ as
\begin{equation}
\label{eq:action}
S[q,p]:=\int_{t_1}^{t_2}\big[\langle p,q_t\rangle-H\big]\ dt=\int_{t_1}^{t_2}\oint \left[pq_t-\mathcal{H}\right]dt\ dx\ ,
\end{equation}
one defines the actual motion of the system as the critical point of $S$ in the space of smooth curves 
$(q(t,x),p(t,x))$ in $\Gamma$ with fixed ends on the first component: $q(t_1,x):=q_1(x)$, $q(t_2,x):=q_2(x)$, $q_1$ and $q_2$ being two assigned fields on $\mathbb{T}$. The smooth increment curves 
$(\delta q,\delta p)(t)$ must then satisfy the condition $\delta q(t_1,x)=\delta q(t_2,x)=0$.
With the notation just introduced, and performing simple integrations by parts, one gets the differential 
$\delta S$ of the action $S$, namely
\begin{equation}
\label{eq:dS}
\delta S=\int_{t_1}^{t_2}\oint \left[\left(q_t-\frac{\delta H}{\delta p}\right)\delta p-
\left(p_t+\frac{\delta H}{\delta q}\right)\delta q\right]dt\ dx\  .
\end{equation}
This is zero for any increment $(\delta q,\delta p)(t)$ if and only if the following Hamilton equations hold:
\begin{equation}
\label{eq:Heqgen}
q_t=\frac{\delta H}{\delta p}\ \ ;\ \ p_t=-\frac{\delta H}{\delta q}\ .
\end{equation}
This is the Hamilton principle of stationary action in classical field theory.

In this work, we restrict our attention to scalar fields $q$ and $p$ defined on the (flat) unit circle $\mathbb{T}$. However, all the above construction and most of the results presented below can be extended to vector fields defined on any multi-dimensional space domain (not necessarily a torus).
 
Consider now a functional $F[q,p]:=\oint\mathcal{F}(x,q,p,q_x,p_x,\dots)dx$. Its time derivative along the solutions of the Hamilton equations (\ref{eq:Heqgen}) associated to $H$ is computed by means of repeated integrations by parts with respect to
$x$. The result can be written as $dF/dt=\left\{F,H\right\}_{q,p}$, where 
\begin{equation}
\label{eq:Pbra}
\left\{F,H\right\}_{q,p}:=
\oint\left(\frac{\delta F}{\delta q}\frac{\delta H}{\delta p}-\frac{\delta F}{\delta p}\frac{\delta H}{\delta q}\right)\ dx:=\left\langle\nabla F,\mathsf{J}_2\nabla H\right\rangle
\end{equation}
is the \emph{Poisson bracket} of the functionals $F$ and $H$. In the second definition above,
$\mathsf{J}_2:={\small \left(\begin{array}{cc} 0 & 1 \\ -1 & 0 \end{array}\right)}$ is the standard $2\times 2$ symplectic
matrix, $\nabla F= {\small \left(\begin{array}{c} \delta F/\delta q \\ \delta F/\delta p \end{array}\right)}$ and the same for $H$. The product $\xi^T\mathsf{J}_2\eta=\xi_1\eta_2-\xi_2\eta_1$, for any pair of vectors $\xi,\eta\in\mathbb{R}^2$, defines the symplectic 2-form. The Poisson bracket (\ref{eq:Pbra}) defines a bilinear, skew-symmetric product on the algebra of functionals defined on $\Gamma$, and it satisfies the Jacobi identity
$\{\{F,G\}_{q,p},H\}_{q,p}+\{\{G,H\}_{q,p},F\}_{q,p}+\{\{H,F\}_{q,p},G\}_{q,p}\equiv 0$
and the Leibniz rule $\{FG,H\}_{q,p}=F\{G,H\}_{q,p}+\{F,H\}_{q,p}G$ 
for any triple of functionals $F$, $G$, $H$. The algebra of functionals on 
$\Gamma$ endowed with the Poisson bracket becomes a Poisson algebra, and is typically referred to as the algebra of observables. 

\begin{remark}
Given any skew-symmetric bilinear product on an algebra, the Jacobi identity characterises it as a Lie bracket. The latter, by further assuming the Leibniz rule, becomes a Poisson bracket (by definition). Thus, a Poisson algebra is a Lie algebra of Leibniz type.   
\end{remark}

The fundamental Poisson brackets of the Hamiltonian field theory on $\mathbb{T}$ are
\begin{equation}
\{q(x),p(y)\}_{q,p}=\delta(x-y)\ \ ;\ \ \{q(x),q(y)\}_{q,p}=\{p(x),p(y)\}_{q,p}=0\ ,
\end{equation}
where $\delta(x)$ is the Dirac delta distribution on 
$\mathbb{T}$. This is proved by considering the identity $\oint \delta(x-y)f(y)dy=f(x)$, valid for any continuous function on $\mathbb{T}$, from which $\delta f(x)/\delta f(y)=\delta(x-y)$ follows.
As a consequence, the Hamilton equations (\ref{eq:Heqgen}) can be written in the form
\begin{equation}
q_t=\{q,H\}_{q,p}\ \ ;\ \ p_t=\{p,H\}_{q,p}\ .
\end{equation}

\subsection{Results: informal presentation}

Within the Hamiltonian formalism just introduced, we study a well-defined class of problems, defined as follows. We introduce a ``bookkeeping parameter'' $\lambda$ and give a weight $\lambda^2$
to both $q_x$ and $p$, weighting any successive derivative $\partial_x$ of them by $\lambda$. 
Defining $r:=q_x$, this amounts to assume a ``grading'' (perturbative ordering of the dynamical variables
and their derivatives) $r\sim p\ll r_x\sim p_x\ll r_{xx}\sim p_{xx}\dots$, and $(r_x)^2\sim r^3$, where, in a loose notation, $\sim$ and $\ll$ mean ``of the same order of'' and ``of an order smaller than'', respectively. 
For the sake of simplicity,  we assume the smooth density $\mathcal{H}$ of $H$ to be a function of $q_x$, $p$ and their  space derivatives up to order four. Such a limitation is due to the fact that, in the present paper, we do not consider 
$\lambda$-expansions of the Hamiltonian $H$ to degree higher than four, and  with the chosen grading, derivatives of $q_x$
and $p$ of order higher than four enter the perturbative problem from degree five on (in $\lambda$). 
The parameter $\lambda$ is formal: it is necessary to define the grading and to trace the perturbative ordering, and it can be set to one at the end of the computations.

\begin{definition}
\label{def:Class}
The class of problems considered in the present work is defined by the family of Hamiltonians of the form 
\begin{equation}
\label{eq:Hla}
H_\lambda:=\frac{1}{\lambda^4}\oint \mathcal{H}(\lambda^2q_{x},\lambda^2p,\lambda^3q_{xx},\lambda^3p_x,
\dots,\lambda^6q_{xxxxx},\lambda^6p_{xxxx})\ dx\ ,
\end{equation}
with the condition
\begin{equation}
\label{eq:cond}
\left(\frac{\partial^2\mathcal{H}}{\partial q_x^2}\Big|_{\lambda=0}\right)\ 
\left(\frac{\partial^2\mathcal{H}}{\partial p^2}\Big|_{\lambda=0}\right)>0\ .
\end{equation}
\end{definition} 
By Taylor expanding $\mathcal{H}$ in powers of $\lambda$, close to $\lambda=0$, and assuming without loss of generality that $\mathcal{H}|_{(q,p)=0}=0$, one gets a perturbative ordering of the Hamiltonian of the form
\begin{equation}
\label{eq:Hamexp}
H_\lambda=H_0+ \lambda H_1 +\lambda^2 H_2+\lambda^3 H_3 +\lambda^4 H_4 +\cdots\ .
\end{equation}
We here observe that the absence of a term proportional to $1/\lambda^2$ in the latter expansion is due to the conservation of the total momentum $\oint p\ dx$, which can be always set to zero.

The main results are now presented in an informal way, their precise statements and proofs being provided below. The condition (\ref{eq:cond}), which characterises the elliptic nature of the fixed point 
$q=p=0$, implies that there exists a canonical transformation bringing the unperturbed Hamiltonian $H_0$ into the standard wave form
\begin{equation}
K_0:=\oint \frac{p^2+(q_x)^2}{2}\ dx\ ,
\end{equation}
and leaving the perturbative expansion (\ref{eq:Hamexp}) unaltered. The equations of motion associated to the latter Hamiltonian are $q_t=p$, $p_t=q_{xx}$,
i.e., in second order form, the wave equation $q_{tt}=q_{xx}$.

Now, in terms of the variables $r:=q_x$ and $p$, the expanded Hamiltonian (\ref{eq:Hamexp}) reads 
$K_0+\lambda H_1+\lambda^2 H_2+\cdots$, where
$K_0=\frac{1}{2}\oint(p^2+r^2)dx$, and the $H_j$ are functionals whose density is a homogeneous polynomial of ``grade'' $j$ in $r$, $p$ and their derivatives. One then conveniently performs the change of field variables $(r,p)\mapsto (u,v)$ defined by $u=(r+ p)/\sqrt{2}$, $v=(r-p)/\sqrt{2}$, in terms of which 
$K_0=\frac{1}{2}\oint(u^2+v^2)dx$, and its flow separates the left from right wave: $u_t= u_x$, $v_t=-v_x$, so that $u$ and $v$ are simply the left and right translation of the corresponding initial datum, respectively. 

The key idea is now to decouple the left from the right dynamics to higher orders. To such an end, we build up an explicit transformation of the field variables
\[
\mathcal{T}_\lambda:\ (u,v)\mapsto(\tilde{u},\tilde{v})\ ,
\]
$\lambda$-close to the identity, which sets the Hamiltonian 
$H=K_0+\lambda H_1+\lambda ^2H_2+\cdots$ (expressed in the $(u,v)$ variables) into normal form
to order $1\leq s\leq 4$ with respect to $K_0$. This means, by definition, that 
$H\circ\mathcal{T}^{-1}_\lambda=K_0+\lambda Z_1+\lambda ^2Z_2+\cdots$ is such that
the $Z_j$ are first integrals of $K_0$, for $1\leq j\leq s$.

The results proved below are the following. In the general case, i.e. no further hypotheses being added to
the Definition~\ref{def:Class}, we show that the normal form Hamiltonian to order $s=2$ has the form
$K_0+\lambda^2 Z_2+\cdots$, and the corresponding dynamics of the variables $\tilde{u},\tilde{v}$ reads
\begin{equation}
\label{eq:ord2}
\left\{
\begin{split}
	\tilde{u}_t\;&=\; c_l \tilde{u}_x+a_l\kappa_3(\tilde{u})+\cdots	\\
	\tilde{v}_t\;&=\;- c_r \tilde{v}_x-a_r\kappa_3(\tilde{v})+\cdots	
\end{split}
\right.\ .
\end{equation}
On the other hand, in certain relevant cases, such as the ``mechanical'' one, where $\mathcal{H}=
p^2/2+\mathcal{U}(q_x,q_{xx},\dots,q_{xxxxx})$, or that of the water waves, one has $H_1=H_3\equiv0$.
In such situations the normal form Hamiltonian to order $s=4$ has the form $K_0+\lambda^2 Z_2+
\lambda^4 Z_4+\cdots$, whose associated dynamics reads
\begin{equation}
\label{eq:ord4}
\left\{
\begin{split}
	\tilde{u}_t\;&=\; c_l \tilde{u}_x+a_l\kappa_3(\tilde{u})+b_l\kappa_5(\tilde{u})+\cdots	\\
	\tilde{v}_t\;&=\;- c_r \tilde{v}_x-a_r\kappa_3(\tilde{v})-b_r\kappa_5(\tilde{v})+\cdots	
\end{split}
\right.\ .
\end{equation}
In systems (\ref{eq:ord2}) and (\ref{eq:ord4}) $a_{l/r}$, $b_{l/r}$ and $c_{l/r}$ are certain constants
(depending on the model, on the parameter $\lambda$ and on the initial condition), whereas $\kappa_3$ and $\kappa_5$ are the vector fields of the first and second integral in the KdV hierarchy \cite{ArnoldBookCM}, namely
\begin{eqnarray}
	\kappa_3(w)&=&\gamma w w_x+w_{xxx} =\partial_x\frac{\delta I_3}{\delta w} \, ,\\
	\kappa_5(w)&=&\frac{5}{6} \gamma^2 w^2 w_x+\frac{10}{3}\gamma w_x w_{xx} + \frac{5}{3} \gamma w w_{xxx} +  u_{xxxxx} = \partial_x\frac{\delta I_5}{\delta w}\, .\label{eq:kappa5}	
\end{eqnarray}
Here $\gamma \in \mathbb{R}$ is a parameter, whose value is explicitly determined by the first order normal form transformation, whereas the first two integrals $I_3$ and $I_5$ of the KdV hierarchy are given by
\begin{eqnarray}
	I_3&=&\oint \left(\frac{\gamma}{6}w^3-\frac{1}{2}(w_{x})^2 \right) \, dx \, ,\\
	I_5&=&\oint \left(\frac{5\gamma^2}{72} w^4 + \frac{5\gamma}{12} w^2 w_{xx} +\frac{1}{2} (w_{xx})^2 \right) \, dx \, .
\end{eqnarray}
The conclusion is that \emph{both in the general and in the special case, the dynamics of the perturbed wave equation is integrable in the KdV hierarchy sense to the second perturbative order included}. 
\begin{remark} \label{rem:Kodama}
The standard Hamiltonian normal form construction to leading order always leads to (\ref{eq:ord2}). On the other hand, in order to get (\ref{eq:ord4}), the second step of Hamiltonian normalisation is not enough, in general. 
 With the aid of Hamiltonian transformations, we generally succeed in decoupling equations of motion for the two independent variables to higher orders but, in general, this is not enough to conjugate the equations of motion to those of the KdV integrable hierarchy. It is remarkable that, at this point, each of the two decoupled equations of motion fall in a class that was analysed by Kodama \cite{Kod85,Kod87a,Kod87b,Kodama2} (and whose results have been extended to equations on the torus in \cite{GPR}). Without entering the details, which could deserve an entire work, the idea is the following. One starts from a PDE of the form
\begin{equation}
\label{eq:udotgen}
	u_t \;=\; F(u)\;:=\; F_0(u)+\lambda F_1(u)+\lambda^2 F_2(u)+O(\lambda^3)
\end{equation}
and one considers the effect of a change of variables $u \mapsto u+\lambda G(u)$. Denoting with $[\cdot,\cdot]$ the commutator of two vector fields, the effect of the transformation on the RHS of the PDE (\ref{eq:udotgen}) is
\begin{equation}
\begin{split}
	F(u) \mapsto e^{\lambda[G,\cdot]} F(u)=&F_0(u)+\lambda\big(F_1(u)+[G,F_0](u)\big)\\
	+&\lambda^2 \left(F_2(u)+[G,F_1](u)+\frac{1}{2} [G,[G,F_0]](u)\right)+O(\lambda^3) \, .
\end{split}
\end{equation}
The latter conjugation of the vector field $F$ holds in general, i.e. for any $G$. The Kodama transformation consists in making use of the natural grading of the KdV equation in order to choose a $G$ consisting of a finite sum of monomials and satisfying two fundamental requirements. The first one is $[G,F_0]=0$, which allows to leave 
$F_1$ in the KdV hierarchy, as it is given by the normal form construction. The second  one consists just in ``forcing'' $F_2+[G,F_1]$ to fit the KdV hierarchy, even though $F_2$ does not.  This part of the theory is only sketched in the present review and we refer to \cite{Kodama2,GPR} for details.   
\end{remark}
\begin{remark}
The treatment of the general case to orders $s=3$ and $s=4$ requires three and four perturbative steps, respectively, and is currently in progress. 
\end{remark}

\section{Abstract setting: perturbation theory in Poisson systems}

In order to treat our problem, we need to frame our Hamiltonian field theory 
in the more general context of Poisson systems \cite{MaRa,Vaism_Book}. Such a short digression is adapted to our present purposes and does not aim at any generality. 

\subsection{Poisson formalism}

\begin{definition}
	Let $\Gamma$ be the phase space of the system and let $\mathcal{A}(\Gamma)$ be the algebra of real-valued smooth functions defined on $\Gamma$. A binary application, or product, $\{\cdot, \cdot \}:\mathcal{A}(\Gamma) \times \mathcal{A}(\Gamma) \to \mathcal{A}(\Gamma)$ is called a \emph{Poisson bracket} on $\Gamma$ if it satisfies the following properties
	\begin{enumerate}[label=(\roman*)]
		\item Skew-symmetry: $\{F,G\}=-\{G,F\}$;
		\item Left-linearity: $\{\alpha F+ \beta G,H\}=\alpha \{F,H\}+\beta \{G,H\}$;
		\item Jacobi identity: $\{F,\{G,H\}\}+\{G,\{H,F\}\}+\{H,\{F,G\}\}=0$;
		\item Leibniz rule: $\{FG,H\}=F\{G,H\}+\{F,H\}G$,
	\end{enumerate}
$\forall F,G,H \in \mathcal{A}(\Gamma)$ and $\alpha,\beta \in \mathbb{R}$. The pair $(\mathcal{A},\{\ ,\})$
is called \emph{Poisson algebra}.
\end{definition}

\begin{remark}
The bracket $\{\cdot, \cdot\}_{q,p}$ defined in \eqref{eq:Pbra} satisfies axioms (i)-(iv) in the above definition. Thus, the axiomatic definition above contains both the usual Hamiltonian mechanics 
and the field theory (as well as quantum mechanics). 
\end{remark}
For the sake of concreteness, let us consider the case where $\Gamma$ is the space of two components, smooth, real-valued fields $u(x)=(u_1(x),u_2(x))^T$ defined on $\mathbb{T}$ (what we show can be exported to the case of $n$
components, complex-valued fields on a $d$-dimensional domain $D$).
  
By analogy with the standard case (\ref{eq:Pbra}), a bilinear, skew-symmetric, Leibniz bracket on such a space is defined by the formula
\begin{equation}
\label{eq:brascp}
	\{F,G\}_J:= \langle \nabla F, J \nabla G \rangle:=\oint \sum_{i,j=1}^2\frac{\delta F[u]}{\delta u_i}
	J_{ij}[u]\frac{\delta G[u]}{\delta u_j}\ dx \ ,
\end{equation}
where $J_{ij}[u]$ is a tensor valued operator, skew-symmetric with respect to the $L_2$ scalar product 
$\langle\ ,\rangle$, functionally dependent on $u$. Notice that with the choice
$J=\mathsf{J}_2$, and denoting $u_1=q$, $u_2=p$, (\ref{eq:brascp}) coincides with (\ref{eq:Pbra}).
On the other hand, the bracket (\ref{eq:brascp}) does not satisfy the Jacobi identity (hypothesis (iii) above), in general. We state without proof the following Proposition \cite{MaRa}, which characterises the Poisson brackets of the form (\ref{eq:brascp}). 
\begin{proposition}
\label{schout}
The bracket (\ref{eq:brascp}) satisfies the Jacobi identity, so that it is a Poisson bracket, if and only if
the skew-symmetric tensor $J[u]$ satisfies the Schouten identity
\begin{equation}
\label{eq:schout}
\sum_{s=1}^2\left(J_{is}D_{u_s} J_{jk}+J_{js}D_{u_s} J_{ki}+J_{ks}D_{u_s} J_{ij}\right)=0
\end{equation}
for all $u$ and all $i,j,k=1,2$. 
\end{proposition}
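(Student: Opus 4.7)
The plan is to compute the Jacobiator $\mathcal{J}(F,G,H):=\{F,\{G,H\}_J\}_J + \{G,\{H,F\}_J\}_J + \{H,\{F,G\}_J\}_J$ directly from \eqref{eq:brascp}, exhibit it as the integral pairing of the Schouten expression \eqref{eq:schout} with the variational gradients of $F,G,H$, and then read off both implications by a standard non-degeneracy argument.

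First I would compute the variational derivative of the inner bracket. Writing $F_i:=\delta F/\delta u_i$ and $F_{il}:=\delta^2 F/(\delta u_i\,\delta u_l)$ (formally self-adjoint as a bidistribution in $(i,l)$), the product rule applied to $\{G,H\}_J=\oint\sum_{i,j}G_iJ_{ij}[u]H_j\,dx$ gives
\begin{equation*}
\frac{\delta\{G,H\}_J}{\delta u_l} \;=\; \sum_{i,j}\Bigl[G_{il}J_{ij}H_j + G_i(D_{u_l}J_{ij})H_j + G_iJ_{ij}H_{jl}\Bigr].
\end{equation*}
Plugging this into $\{F,\{G,H\}_J\}_J=\oint\sum_{k,l}F_kJ_{kl}\,\delta\{G,H\}_J/\delta u_l\,dx$ and summing cyclically over $(F,G,H)$ produces nine ``Hessian'' contributions (three for each of $F_{\bullet\bullet}$, $G_{\bullet\bullet}$, $H_{\bullet\bullet}$) together with three contributions involving $D_{u_s}J$.

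The central algebraic step is to verify that the nine Hessian contributions cancel pairwise. For instance, the term containing $G_{il}$ coming from $\{F,\{G,H\}_J\}_J$, schematically $F_kJ_{kl}G_{il}J_{ij}H_j$, pairs with the term containing $G_{lk}$ arising in $\{H,\{F,G\}_J\}_J$; after relabelling dummy indices and exploiting both the skew-symmetry $J_{kl}=-J_{lk}$ and the formal self-adjointness $G_{il}=G_{li}$, the sum vanishes. The Hessian contributions in $F$ and $H$ cancel by the same mechanism. What survives is exactly
\begin{equation*}
\mathcal{J}(F,G,H) \;=\; \oint \sum_{i,j,k,s} F_i\,G_j\,H_k \Bigl[J_{is}D_{u_s}J_{jk} + J_{js}D_{u_s}J_{ki} + J_{ks}D_{u_s}J_{ij}\Bigr]\,dx,
\end{equation*}
i.e.\ the integral pairing of the Schouten tensor with the triple of gradients. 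If \eqref{eq:schout} holds the integrand vanishes identically and Jacobi follows. Conversely, if the bracketed expression were nonzero at some field $u^\ast$ and some triple $(i^\ast,j^\ast,k^\ast)$, one would choose linear functionals of the form $F=\oint f_{i^\ast}(x)u_{i^\ast}(x)\,dx$ (and analogously $G,H$) whose densities are bumps localised near $u^\ast$ on the appropriate components, obtaining $\mathcal{J}(F,G,H)\neq 0$ and contradicting Jacobi.

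The main obstacle I expect is the Hessian-cancellation step when $J[u]$ is not ultra-local but involves $\partial_x$, in which case $G_{il}$ has a distributional kernel with derivatives of $\delta(x-y)$ and the relabellings must be justified by honest integrations by parts. This is the standard difficulty of formal variational calculus; it is cleanly handled by passing to the symbols of the variational operators, where the skew-symmetry of $J$ and the self-adjointness of the Hessian become algebraic identities and the cancellation is manifest.
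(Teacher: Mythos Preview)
The paper does not prove this proposition: it is explicitly ``stated without proof'' with a reference to Marsden--Ratiu, so there is nothing to compare your argument against here. Your outline is the standard textbook computation (found e.g.\ in Marsden--Ratiu or Olver) and the strategy---expand the Jacobiator, cancel the Hessian terms using skew-symmetry of $J$ and symmetry of the second variation, and read off the Schouten expression from what remains---is correct.

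One point of care: your displayed formula for $\mathcal{J}(F,G,H)$ writes the surviving terms as $F_iG_jH_k$ times the bracketed operator expression, but when $J$ and $D_{u_s}J$ are genuine differential operators this is not literally a product of scalars. In the actual computation the term coming from $\{F,\{G,H\}_J\}_J$ has the shape $\oint F_i\,J_{is}\!\bigl[G_j(D_{u_s}J_{jk})H_k\bigr]\,dx$, i.e.\ $J_{is}$ acts (in $x$) on the whole bracket and $D_{u_s}J_{jk}$ acts on $H_k$. The Schouten identity \eqref{eq:schout} is therefore an identity of tri-differential operators, and your pairing should reflect who acts on whom; otherwise the converse step (choosing bump functionals to isolate a nonzero component) does not quite go through as written. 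You already flag this difficulty in your last paragraph, and your proposed fix---pass to symbols / the formal variational bicomplex---is exactly how the honest proof handles it.
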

Here $D_{u_s}$ denotes the weak partial derivative with respect to $u_s$, defined in the usual way:
\begin{equation}
\label{eq:Dus}
\left(D_{u_s}f\right)h:=\frac{d}{d\epsilon}f[u_s+\epsilon h])\Big|_{\epsilon=0}\ ,
\end{equation}
for any $f$ functionally dependent on $u$. Observe that, for example, $D_{u_1}u_1=1$,
$D_{u_2}\partial_xu_2=\partial_x$, and so on.  Thus, any skew-symmetric tensor $J[u]$ satisfying the identity (\ref{eq:schout}) is a \emph{Poisson tensor}, i.e. it defines through (\ref{eq:brascp}) a Poisson bracket. An obvious but fundamental consequence of Proposition \ref{schout} is the following 
\begin{corollary}
Any skew-symmetric tensor $J$ independent of $u$ (i.e. constant on the phase space) is a Poisson tensor.
\end{corollary}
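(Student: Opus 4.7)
The plan is to reduce the corollary to a direct verification of the Schouten identity via Proposition \ref{schout}. Since $J$ is by hypothesis a skew-symmetric tensor, the only thing left to check in order to call it a Poisson tensor is the Jacobi identity on the induced bracket $\{F,G\}_J = \langle \nabla F, J\nabla G\rangle$, which by Proposition \ref{schout} is equivalent to the Schouten identity
\begin{equation*}
\sum_{s=1}^{2}\left(J_{is}D_{u_s}J_{jk} + J_{js}D_{u_s}J_{ki} + J_{ks}D_{u_s}J_{ij}\right) = 0 .
\end{equation*}

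The key observation is then purely mechanical: since $J$ is constant on $\Gamma$, its components $J_{ij}$ are independent of $u_1$ and $u_2$. Using the definition of the weak derivative $D_{u_s}$ given in \eqref{eq:Dus}, one has
\begin{equation*}
(D_{u_s}J_{jk})h = \frac{d}{d\epsilon}J_{jk}[u_s + \epsilon h]\Big|_{\epsilon=0} = 0
\end{equation*}
for every admissible increment $h$, every index $s$, and every pair $j,k$. Thus each of the three summands in the Schouten identity vanishes identically, so the identity is trivially satisfied for all $u$ and all $i,j,k \in \{1,2\}$.

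I expect no genuine obstacle here: the statement is essentially a sanity check showing that the standard constant symplectic form used in Section 2 (in particular $J = \mathsf{J}_2$) falls under the general Poisson formalism of Proposition \ref{schout}. The only point requiring a minimal care is the interpretation of $D_{u_s}$ when the entries of $J$ are operators (e.g. differential operators) rather than numerical functions; but constancy of $J$ means precisely that these operators do not depend on $u$, so their Gateaux derivative in the direction of any $h$ is the zero operator and the vanishing of each summand persists. Once this is noted, the corollary is an immediate consequence of Proposition \ref{schout}.
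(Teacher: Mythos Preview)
Your proof is correct and follows exactly the reasoning the paper intends: the corollary is stated as an ``obvious but fundamental consequence'' of Proposition~\ref{schout}, and your argument supplies precisely the omitted detail, namely that $D_{u_s}J_{jk}=0$ for a constant $J$, so every term in the Schouten identity vanishes.
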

\begin{remark}
One does not require $J[u]$ to be non-degenerate, so that $J$ is allowed to have a nontrivial kernel. 
The functionals $F$ such that $J \nabla F = 0$ are called \emph{Casimir invariants} of the given Poisson structure, and represent constants of motion for all Hamiltonian systems: $\{H,F\}= 0$ 
for any $H \in \mathcal{A}(\Gamma)$.
\end{remark}
Within this framework, fixing a Hamiltonian $H[u]$ in the given Poisson algebra, the associated dynamics is defined in the usual way, namely 
\begin{equation}
\label{eq:genHam}
	u_t=\{u,H\}_J=J\nabla_u H\ ,
\end{equation}
to be read by components, $\nabla_u H$ being the functional gradient of $H[u]$. Of course, any functional $F$ evolves along the solutions of 
(\ref{eq:genHam}) according to $F_t=\{F,H\}_J$.
Hamiltonian dynamical systems, in the generalised Poisson sense, have the form (\ref{eq:genHam}), which includes the standard (symplectic) case. 

The fundamental feature of generalised Hamiltonian systems is their invariant character under any change of variables. 
\begin{proposition}
Any smooth change of variables $f:u\mapsto \tilde{u}=f[u]$ maps the Hamiltonian system 
$u_t=J\nabla_u H$ into the Hamiltonian system $\tilde{u}_t=\tilde{J}\nabla_{\tilde{u}} \tilde{H}$, where $\tilde{H}=H\circ f^{-1}$, whereas the transformed Poisson tensor $\tilde{J}$ is given by
\begin{equation}
\label{eq:transP}
\tilde{J}[\tilde{u}]:=(D_uf)J(D_uf)^T\Big|_{u=f^{-1}[\tilde{u}]}\ .
\end{equation}
The corresponding Poisson brackets are related, for any $F,G\in\mathcal{A}(\Gamma)$, by
\begin{equation}
\label{eq:Pbrarel}
\{F,G\}_J\circ f^{-1}=\{F\circ f^{-1},G\circ f^{-1}\}_{\tilde J}\ .
\end{equation}
\end{proposition}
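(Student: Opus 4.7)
The plan is to derive everything from the transformation law of functional gradients under the change of variables $\tilde u = f[u]$, and then read off both assertions of the proposition from a single short computation. The central ingredient is the chain rule in the $L_2$ inner product, together with the interpretation of $D_u f$ as a (possibly differential) linear operator acting on smooth fields, whose transpose $(D_u f)^T$ is to be read as the formal $L_2$-adjoint of $D_u f$.

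First, given an observable $F$ and its push-forward $\tilde F := F\circ f^{-1}$, I would compute the Gateaux differential $\delta F[u;\delta u]$ both directly and via the chain rule applied to $F[u] = \tilde F[f[u]]$. The latter yields $\delta F[u;\delta u] = \delta \tilde F[\tilde u;(D_u f)\delta u]$, so that, rewriting both sides using (\ref{eq:fddef}), the identity $\langle \nabla_u F,\delta u\rangle = \langle \nabla_{\tilde u}\tilde F,(D_u f)\delta u\rangle = \langle (D_u f)^T \nabla_{\tilde u}\tilde F,\delta u\rangle$ holds for every smooth increment $\delta u$. Hence $\nabla_u F = (D_u f)^T\nabla_{\tilde u}\tilde F$, which is the pointwise transformation law for variational derivatives.

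Plugging this identity into the defining formula (\ref{eq:brascp}) and moving $(D_u f)^T$ to the other side of the pairing gives $\{F,G\}_J = \langle \nabla_{\tilde u}\tilde F,\, (D_u f)\, J\, (D_u f)^T \nabla_{\tilde u}\tilde G\rangle$, which, after evaluation at $u=f^{-1}[\tilde u]$, is exactly $\{\tilde F,\tilde G\}_{\tilde J}$ with $\tilde J$ defined by (\ref{eq:transP}). This establishes (\ref{eq:Pbrarel}). Skew-symmetry of $\tilde J$ is immediate from the factorised formula; for the Jacobi identity, rather than verifying the Schouten identity (\ref{eq:schout}) for $\tilde J$ by a brute chain-rule calculation, I would invoke the fact that $F\mapsto F\circ f^{-1}$ is an algebra isomorphism of $\mathcal{A}(\Gamma)$ onto itself, so (\ref{eq:Pbrarel}) transports Jacobi for $\{\cdot,\cdot\}_J$ into Jacobi for $\{\cdot,\cdot\}_{\tilde J}$ automatically. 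The transformation of Hamilton's equations then follows from a single differentiation: $\tilde u_t = (D_u f)\,u_t = (D_u f)\,J\,\nabla_u H = (D_u f)\,J\,(D_u f)^T\nabla_{\tilde u}\tilde H = \tilde J\,\nabla_{\tilde u}\tilde H$.

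The main subtlety is the handling of $(D_u f)^T$ when $f$ depends on $u$ through its spatial derivatives. In that case $D_u f$ is a linear differential operator, and its formal $L_2$-adjoint is produced by the same integrate-by-parts-and-erase-boundary-terms recipe used in (\ref{eq:dFdu}). The periodic geometry of $\mathbb{T}$ guarantees that the discarded boundary terms vanish on the class of smooth fields we work with, so $(D_u f)^T$ is genuinely the $L_2$-adjoint of $D_u f$ acting on $\mathcal{A}(\Gamma)$. Once this bookkeeping is made explicit, the rest of the proof is a direct transcription of the finite-dimensional statement for Poisson manifolds.
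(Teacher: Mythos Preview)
Your proof is correct and carries out precisely the direct computation the paper alludes to; note that the paper itself does not supply a proof of this proposition, stating only that ``the proof of the above Proposition is direct and not reported.'' Your argument via the chain rule for functional gradients, $\nabla_u F=(D_u f)^T\nabla_{\tilde u}\tilde F$, followed by substitution into the bracket and into the equations of motion, is the standard route and is exactly what the authors have in mind by ``direct''; your remark on the Jacobi identity for $\tilde J$ (not strictly part of the proposition, but asserted by the paper immediately after it) is also correct and well-placed.
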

In the latter formula, $D_u$ denotes the weak Jacobian of $u$, as defined in (\ref{eq:Dus}). The proof of the above Proposition is direct and not reported. The important point is the following: if $J$ is a Poisson tensor, its transformed $\tilde{J}$ under any $f$ is a Poisson tensor. Of course, the Hamilton equations are not
invariant in form under $f$, which happens if and only if $\tilde J=J$. \emph{Canonical transformations} are then defined as those transformations $f$ leaving the Poisson tensor invariant. In order to check the canonicity of a transformation $f$, it is easier to make use of (\ref{eq:Pbrarel}) which, with 
$J=\tilde J$, yields $\{F,G\}_J\circ f^{-1}=\{F\circ f^{-1},G\circ f^{-1}\}_J$.  
\begin{remark}
If $J=\mathsf{J}_2$, the transformation law (\ref{eq:transP}), together with the canonicity condition 
$\tilde J=J$, yields the requirement that the Jacobian $D_u f$ be symplectic. 
\end{remark}
The equation of motion (\ref{eq:genHam}) can be rewritten as $u_t=\mathcal{L}_Hu$, where 
the operator $\mathcal{L}_{H}\cdot=\{\cdot, G \}_J$, such that $\mathcal{L}_HF=\{F,H\}_J$ for any $F$, is 
the Lie derivative of $F$ in the direction of the Hamiltonian vector field $J\nabla H$. One can then formally solve the equation by exponentiation, which defines the flow $\Phi_H^t$ of the system, namely
\begin{equation}
	u(t)=e^{t\mathcal{L}_H}w:=\Phi^t_H(w)\ ,
\end{equation}
where $w=u(0)$ is an arbitrary initial condition. Of course the exponential operator above is defined, as usual, by its formal series
\begin{equation}
\label{eq:LieSeries}
 e^{t \mathcal{L}_H}= 1+ t\mathcal{L}_H+\frac{t^2}{2} \mathcal{L}_H^2+O(t^3) \, .
\end{equation}
Now, since the evolution equation $F_t=\{F,H\}_J=\mathcal{L}_HF$ of any functional $F$ is
solved by $e^{t\mathcal{L}_H}F(w)$, which must equal $F[u(t)]=F[\Phi^t_H(w)]$ for any initial condition $w$, one gets the useful relation 
\begin{equation}
\label{eq:exchange}
e^{t\mathcal{L}_H}F=F\circ \Phi^t_H\ ,
\end{equation}
which is known as the exchange Lemma; we will make use of it below. 

The Hamiltonian flow $\Phi^t_H:\Gamma\to\Gamma$ represents a one-parameter family of canonical transformations of $\Gamma$ into itself (the family is a group if the flow is global).
\begin{proposition}
For any $t$ such that $\Phi^t_H$ exists, and any pair of functionals $F$ and $G$, one has
\begin{equation}
\label{eq:flowcan}
	\{F,G\}_J \circ \Phi_H^t \;=\; \{F \circ \Phi_H^t,H\circ \Phi_H^t\}_J\ .
\end{equation}
\end{proposition}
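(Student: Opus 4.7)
The plan is to reduce the canonicity property \eqref{eq:flowcan} to the Jacobi identity, exploiting the Lie series representation \eqref{eq:LieSeries} together with the exchange Lemma \eqref{eq:exchange}. The infinitesimal version of the claim is the statement that $\mathcal{L}_H$ is a derivation of the Poisson bracket, namely
\begin{equation*}
\mathcal{L}_H\{F,G\}_J \;=\; \{\mathcal{L}_HF,\,G\}_J \,+\, \{F,\,\mathcal{L}_HG\}_J,
\end{equation*}
which, rewritten as $\{H,\{F,G\}_J\}_J = \{\{H,F\}_J, G\}_J + \{F, \{H,G\}_J\}_J$, is precisely the Jacobi identity (axiom (iii)) after using skew-symmetry (axiom (i)).

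From this derivation property a routine induction on $n$ yields a Leibniz-type binomial formula
\begin{equation*}
\mathcal{L}_H^n\{F,G\}_J \;=\; \sum_{k=0}^{n}\binom{n}{k}\,\bigl\{\mathcal{L}_H^k F,\,\mathcal{L}_H^{n-k}G\bigr\}_J,
\end{equation*}
in exact analogy with $(d/dt)^n$ applied to a product. Multiplying by $t^n/n!$, summing over $n\ge 0$, and using bilinearity of the Poisson bracket delivers the formal identity $e^{t\mathcal{L}_H}\{F,G\}_J = \{e^{t\mathcal{L}_H}F,\,e^{t\mathcal{L}_H}G\}_J$. A termwise application of the exchange Lemma \eqref{eq:exchange} then converts both sides into compositions with the flow and yields precisely \eqref{eq:flowcan}.

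A dynamically more transparent alternative is to set $\phi(t):=\{F,G\}_J\circ\Phi^t_H - \{F\circ\Phi^t_H,\,G\circ\Phi^t_H\}_J$, observe $\phi(0)=0$, and check $\phi'(t)\equiv 0$ by differentiating along the flow using the derivation identity above; uniqueness then forces $\phi\equiv 0$. The conceptual content is light — everything rests on the Jacobi identity — and the only genuine obstacle is the formal bookkeeping: one must justify the exchange of $d/dt$ with $\Phi^t_H$ and the termwise validity of the Lie series manipulations. Within the polynomial/perturbative framework adopted in the paper, where densities are polynomials in $u$ and its derivatives, each order in $t$ involves only finitely many derivatives of $F$ and $G$ and these manipulations are legitimate term by term.
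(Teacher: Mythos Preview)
Your main argument via the Lie series is correct: the derivation property of $\mathcal{L}_H$ is exactly the Jacobi identity, the binomial formula follows by induction, and summing gives $e^{t\mathcal{L}_H}\{F,G\}_J=\{e^{t\mathcal{L}_H}F,e^{t\mathcal{L}_H}G\}_J$, which the exchange Lemma converts into \eqref{eq:flowcan}. (You have also silently corrected the evident misprint in the displayed statement, where the second argument on the right-hand side should be $G\circ\Phi_H^t$ rather than $H\circ\Phi_H^t$.)

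The paper does not take this route. It uses your ``dynamically more transparent alternative'': set $\Delta(t)$ equal to the difference of the two sides, note $\Delta(0)=0$, and then differentiate. However, your description of this alternative contains a slip. Differentiating and applying the derivation identity to $F\circ\Phi_H^t$ and $G\circ\Phi_H^t$ does \emph{not} give $\phi'(t)\equiv 0$; it gives
\[
\phi'(t)\;=\;\mathcal{L}_H\bigl(\{F,G\}_J\circ\Phi_H^t\bigr)-\mathcal{L}_H\{F\circ\Phi_H^t,\,G\circ\Phi_H^t\}_J\;=\;\mathcal{L}_H\,\phi(t).
\]
Claiming $\phi'(t)=0$ directly would require already knowing $\{F,G\}_J\circ\Phi_H^t=\{F\circ\Phi_H^t,G\circ\Phi_H^t\}_J$, which is circular. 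The paper's point is precisely that $\phi$ satisfies the linear equation $\phi'=\mathcal{L}_H\phi$ with zero initial datum, whence $\phi(t)=e^{t\mathcal{L}_H}\phi(0)\equiv 0$; this is where the ``uniqueness'' you invoke actually enters.

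In short: your Lie-series proof is a valid alternative to the paper's ODE argument, trading a one-line uniqueness appeal for an explicit term-by-term identity; your sketch of the ODE argument matches the paper's proof once the computation of $\phi'$ is corrected from $0$ to $\mathcal{L}_H\phi$.
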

\begin{proof}
Define $\Delta(t)$ the difference between the left and the right hand side of (\ref{eq:flowcan}), and observe that $\Delta(0)\equiv 0$. Making use of relation (\ref{eq:exchange}), and of the Jacobi identity, one gets
$d\Delta(t)/dt=\{\Delta(t),H\}_J=\mathcal{L}_H\Delta(t)$, whose solution is
$\Delta(t)=e^{t\mathcal{L}_H}\Delta(0)\equiv 0$. \qed
\end{proof}
\begin{remark}
In the above treatment, the Hamiltonian $H$ is arbitrary. It follows that any functional $G$, regarded as a Hamiltonian, generates a one-parameter family of canonical transformations, which is given by its flow 
$\Phi^s_G=e^{s\mathcal{L}_G}$, where $\mathcal{L}_G=\{\ ,G\}_J$. In the jargon, $G$ is called  the generating Hamiltonian, and $\mathcal{L}_G=d\Phi_G^s/ds|_{s=0}$ the generator of the transformation.
\end{remark}
As a final point of this section, we state a simple version of the N\"other theorem in the Poisson framework. 

\begin{theorem}
\label{thm:Nother}
If the Hamiltonian $H[u]$ is invariant with respect to the flow $e^{s\mathcal{L}_K}$ of generator 
$\mathcal{L}_K=\{\ ,K\}_J$, i.e. $e^{s\mathcal{L}_K}H=H$ for any $s$ close to zero, then $\{H,K\}_J=0$. 
\end{theorem}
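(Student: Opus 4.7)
The plan is to differentiate the invariance hypothesis at $s=0$ and read off the Poisson bracket directly from the definition of the generator. Concretely, I would start from the Lie series expansion
\[
e^{s\mathcal{L}_K}H \;=\; H + s\,\mathcal{L}_K H + \frac{s^2}{2}\mathcal{L}_K^2 H + O(s^3)
\]
given in (\ref{eq:LieSeries}), which is valid for $s$ small since the exponential is defined by its formal series. The hypothesis $e^{s\mathcal{L}_K}H = H$ for all $s$ in a neighbourhood of zero means that the right-hand side, viewed as a (formal) power series in $s$, reduces to its zeroth-order term, so every coefficient of positive order must vanish.

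The coefficient of $s^1$ is precisely $\mathcal{L}_K H$, so the invariance forces $\mathcal{L}_K H \equiv 0$. Unwinding the definition of the Lie derivative in the direction of the Hamiltonian vector field generated by $K$, namely $\mathcal{L}_K F = \{F,K\}_J$, this equation reads $\{H,K\}_J = 0$, which is the claim. The higher-order vanishing conditions $\mathcal{L}_K^n H = 0$ are then automatic consequences, since $\mathcal{L}_K^n H = \mathcal{L}_K^{n-1}(\mathcal{L}_K H) = 0$.

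No real obstacle is expected: the whole argument is one line of calculus once the Lie-series definition of the flow is in place. The only minor point worth flagging is the direction conventions in the definition $\mathcal{L}_K\cdot = \{\cdot,K\}_J$ versus $\mathcal{L}_K\cdot = \{K,\cdot\}_J$; by skew-symmetry of $\{\,,\}_J$ the conclusion is the same up to an irrelevant sign, so the stated result $\{H,K\}_J=0$ holds regardless of the sign convention one adopts.
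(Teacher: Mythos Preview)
Your argument is correct and is essentially identical to the paper's own proof, which simply differentiates the identity $e^{s\mathcal{L}_K}H=H$ with respect to $s$ at $s=0$. Your additional remarks on the higher-order coefficients and the sign convention are valid but not needed for the result.
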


\begin{proof}
The derivative of $e^{s\mathcal{L}_K}H=H$ with respect to $s$, at $s=0$, gives the result. \qed
\end{proof}
In the practice, one usually ``sees'' a certain symmetry of $H$, i.e. one is able to write down 
a certain transformation $\Psi^s$ such that $\Psi^0=1$ and $H\circ\Psi^s=H$ for any $s$ around zero.
Then, if $\Psi^s$ is a Hamiltonian flow, its generating Hamiltonian $K$ is a constant of motion of the given system.

\subsection{Perturbation theory}

The target of Hamiltonian perturbation theory, which goes back to Poincar\'e and Birkhoff, is the following. Given a Hamiltonian
\begin{equation}
\label{eq:Hpertu}
		H\;=\;H_0+\lambda H_1+\lambda^2 H_2 + O(\lambda^3)\ ,
\end{equation}
formally ordered with respect to the small parameter $\lambda$, one looks for a canonical transformation,
$\lambda$-close to the identity, erasing completely or in part the perturbation terms $H_{j\geq1}$ up to a given order (possibly infinite, as in the KAM theory). As is well known, the complete removal of the perturbation terms, even to the first few orders, is not possible, in general. The best one can do is instead to find a canonical transformation
setting $H$ in normal form, according to the following definition.
\begin{definition}
\label{def:nf}
The Hamiltonian $H_0+\lambda Z_1+\cdots+\lambda^nZ_n+O(\lambda^{n+1})$ is said to be in normal form to order $n\geq 1$ with respect to $H_0$ if $\mathcal{L}_{H_0}Z_j=\{Z_j,H_0\}=0$ for any $j=1,\dots,n$. 
\end{definition}
Observe that $Z_j\equiv0$ fits the normal form requirement, which means that the definition includes the possibility of complete removal of some perturbation terms.

The canonical transformation bringing the Hamiltonian (\ref{eq:Hpertu}) into normal form with respect to $H_0$, to order 
$\lambda^2$ included, is given by composing the flows of two unknown Hamiltonians $G_1$ and $G_2$, namely 
\begin{equation}
\label{eq:normtra}
		u\mapsto \tilde{u}= e^{-\lambda^2 \mathcal{L}_2} e^{-\lambda \mathcal{L}_1} u\ ,
\end{equation} 
where $\mathcal{L}_j:=\mathcal{L}_{G_j}$, $j=1,2$.  The \emph{inverse} transformation maps the Hamiltonian  (\ref{eq:Hpertu}) into
\begin{equation}
\label{eq:Htransf}
\begin{split}
\tilde{H}=\  &e^{\lambda^2 \mathcal{L}_2} e^{\lambda \mathcal{L}_1} H= H_0+
\lambda\left(\mathcal{L}_1 H_0+H_1\right) + \\
+ & \lambda^2\left( \mathcal{L}_2 H_0+\mathcal{L}_1 H_1 + \frac{1}{2} \mathcal{L}_1^2 H_0 + H_2 
\right)+O(\lambda^3)\ ,
\end{split}
\end{equation}
which is obtained by expanding the exponentials. The two generating Hamiltonians are then found by imposing that, according to the Definition \ref{def:nf}, the quantities 
\begin{equation}
\label{eq:Z1Z2}
	\begin{split}
		Z_1 \;&:=\; H_1+\mathcal{L}_1 H_0\ , \\
		Z_2 \;&:=\; \mathcal{L}_2 H_0+\mathcal{L}_1 H_1 + \frac{1}{2} \mathcal{L}_1^2 H_0 + H_2 
	\end{split}
\end{equation}
be first integrals of $H_0$. Observing that $\mathcal{L}_j H_0=-\mathcal{L}_{H_0}G_j$, the latter two equations for the four unknowns $Z_j$ and $G_j$, can be rewritten in the form
\begin{equation}
\label{eq:homo12}
	\begin{split}
		\mathcal{L}_{H_0}G_1 \;&:=\; H_1-Z_1\ , \\
		\mathcal{L}_{H_0}G_2 \;&:=\; \mathcal{L}_1 H_1 + \frac{1}{2} \mathcal{L}_1^2 H_0 + 
		H_2-Z_2\ . 
	\end{split}
\end{equation}
These equations have one and the same structure, namely
\begin{equation}
\label{eq:homo}
\mathcal{L}_{H_0}G_j=S_j-Z_j\ ,\ \ (j=1,2)
\end{equation}
with obvious definitions of the $S_j$. 
\begin{remark}
Looking for a transformation to an arbitrary order $n$, one finds at any order $j=1,\dots,n$ an equation of the form
(\ref{eq:homo}), where $S_j$ is a known quantity if all the equations up to order $j-1$ have been solved.
\end{remark}
Equation (\ref{eq:homo}) is known as the \emph{homological equation} of order 
$j$, which has to be solved determining the unknowns $Z_j$  and $G_j$ under the condition 
$\mathcal{L}_{H_0}Z_j=0$. 

In what follows we suppose that the flow $\Phi_{H_0}^s$ of $H_0$ is global (i.e. it exists for all $s\in\mathbb{R}$) and uniformly bounded with respect to $s$.
\begin{definition}
The time average of any $F$ along the unperturbed flow of $H_0$ is denoted by
\begin{equation}
\label{eq:tav}
	\langle F \rangle_0: =
	 \lim_{t\to\infty}\frac{1}{t} \int_0^t F\circ\Phi_{H_0}^s\ ds\ .
\end{equation}
If the flow of $H_0$ is $\tau$-periodic, i.e. $\Phi_{H_0}^\tau=1$, then
$\langle F\rangle _0=\frac{1}{\tau}\int_0^\tau F\circ\Phi_{H_0}^s ds$.
\end{definition}
\begin{lemma}
\begin{equation}
\label{eq:tavinv}
\mathcal{L}_{H_0}\langle F\rangle_0=0\ .
\end{equation}
\end{lemma}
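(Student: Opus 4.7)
The plan is to interchange the operator $\mathcal{L}_{H_0}$ with the time average, convert $\mathcal{L}_{H_0}$ acting on $F\circ\Phi_{H_0}^s$ into a total $s$-derivative via the exchange Lemma, and then use the uniform boundedness of the unperturbed flow to kill the resulting boundary term.

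First, since $\mathcal{L}_{H_0} = \{\cdot, H_0\}_J$ is a linear operator, I would bring it inside the limit and the integral (this commutation is justified by the smoothness of the flow and the uniform bound on $\Phi^s_{H_0}$ assumed above the statement):
\begin{equation*}
\mathcal{L}_{H_0}\langle F\rangle_0 = \lim_{t\to\infty}\frac{1}{t}\int_0^t \mathcal{L}_{H_0}\bigl(F\circ\Phi^s_{H_0}\bigr)\, ds .
\end{equation*}
Next, I would apply the exchange Lemma \eqref{eq:exchange} with generator $H_0$ and parameter $u$ to the functional $F\circ\Phi^s_{H_0}$: since $\Phi^u_{H_0}\circ\Phi^s_{H_0}=\Phi^{s+u}_{H_0}$, one has $e^{u\mathcal{L}_{H_0}}(F\circ\Phi^s_{H_0}) = F\circ\Phi^{s+u}_{H_0}$, and differentiating at $u=0$ yields the key identity
\begin{equation*}
\mathcal{L}_{H_0}\bigl(F\circ\Phi^s_{H_0}\bigr) = \frac{d}{ds}\bigl(F\circ\Phi^s_{H_0}\bigr).
\end{equation*}

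Substituting this in, the integrand becomes an exact derivative, so the fundamental theorem of calculus collapses the integral to a boundary contribution:
\begin{equation*}
\mathcal{L}_{H_0}\langle F\rangle_0 = \lim_{t\to\infty}\frac{1}{t}\bigl(F\circ\Phi^t_{H_0} - F\bigr).
\end{equation*}
By hypothesis the flow $\Phi^s_{H_0}$ is uniformly bounded in $s$, hence $F\circ\Phi^t_{H_0}$ stays bounded on $\Gamma$ as $t\to\infty$; dividing by $t$ and sending $t\to\infty$ gives zero. In the $\tau$-periodic case the conclusion is even cleaner, since $F\circ\Phi^\tau_{H_0}=F$ makes the boundary term vanish identically.

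I do not expect any serious obstacle: the only delicate point is the interchange of $\mathcal{L}_{H_0}$ with the Ces\`aro limit, which here is legitimate because $\mathcal{L}_{H_0}$ acts as a first-order differential operator on smooth functionals and the Ces\`aro average converges uniformly on bounded sets of $\Gamma$ under the stated boundedness assumption on $\Phi^s_{H_0}$. Everything else is a mechanical application of the exchange Lemma.
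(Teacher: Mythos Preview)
Your proof is correct, but the paper proceeds slightly differently: instead of pushing the generator $\mathcal{L}_{H_0}$ through the Ces\`aro limit and then converting to an $s$-derivative, it first composes $\langle F\rangle_0$ with the full flow $\Phi_{H_0}^r$ and shows directly, via the substitution $a=s+r$ in the time integral, that
\[
\langle F\rangle_0\circ\Phi_{H_0}^r
=\lim_{t\to\infty}\frac{1}{t}\int_0^t F\circ\Phi_{H_0}^{s+r}\,ds
=\lim_{t\to\infty}\frac{1}{t}\int_0^t F\circ\Phi_{H_0}^{a}\,da
=\langle F\rangle_0,
\]
the extra pieces $\int_0^r$ and $\int_t^{t+r}$ being $O(1)$ and hence killed by the factor $1/t$. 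Differentiating this exact identity in $r$ at $r=0$ then gives $\mathcal{L}_{H_0}\langle F\rangle_0=0$. The two arguments are really the integrated and differentiated versions of the same computation; the advantage of the paper's route is that the only interchange needed is composition with a fixed diffeomorphism past the limit, which is more benign than the interchange of the differential operator $\mathcal{L}_{H_0}$ with the Ces\`aro limit that you flag as the delicate point in your version.
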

\begin{proof}
Composing the left and right hand side of (\ref{eq:tav}) with the flow $\Phi_{H_0}^r$, one gets, on the right hand side, $\lim\frac{1}{t} \int_0^t F\circ\Phi_{H_0}^{s+r}\ ds=
\lim\frac{1}{t} \left(\int_r^0+\int_0^t+\int_t^{t+r}\right) F\circ\Phi_{H_0}^{a}\ da
=\lim\frac{1}{t} \int_0^t F\circ\Phi_{H_0}^{a}\ da$. Thus $\langle F\rangle_0\circ \Phi_{H_0}^r=F$, which implies (\ref{eq:tavinv}), and viceversa. \qed
\end{proof}

\begin{lemma}
\label{lem:homsol}
The solution of the homological equation (\ref{eq:homo}) is given by
\begin{equation}
\label{eq:ZjGj}
Z_j=\langle S_j\rangle_0\ \ ;\ \ G_j=\langle G_j\rangle_0+\lim_{t\to\infty}\frac{1}{t}
\int_0^t(s-t)e^{s\mathcal{L}_{H_0}}\left(S_j-\langle S_j\rangle_0\right)ds\ .
\end{equation} 
If the flow of $H_0$ is $\tau$-periodic, $G_j=\langle G_j\rangle_0+\frac{1}{\tau}\int_0^\tau s\ e^{s\mathcal{L}_{H_0}}\left(S_j-\langle S_j\rangle_0\right)ds$. 
\end{lemma}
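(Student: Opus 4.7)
The plan is to exploit two ingredients: for any $\mathcal{L}_{H_0}$-invariant function the time-average is the identity, and composition with $e^{s\mathcal{L}_{H_0}}$ turns $\mathcal{L}_{H_0}$ into $d/ds$. This converts the homological equation into an ODE in $s$ that integrates explicitly, while the gauge freedom (the $\langle G_j\rangle_0$ piece) is fixed by averaging.

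\textbf{Determining $Z_j$.} Apply $\langle\cdot\rangle_0$ to both sides of (\ref{eq:homo}). Since the constraint $\mathcal{L}_{H_0}Z_j=0$ gives $\tfrac{d}{ds}e^{s\mathcal{L}_{H_0}}Z_j=e^{s\mathcal{L}_{H_0}}\mathcal{L}_{H_0}Z_j=0$, one has $e^{s\mathcal{L}_{H_0}}Z_j\equiv Z_j$ and therefore $\langle Z_j\rangle_0=Z_j$. For the left-hand side, writing $e^{s\mathcal{L}_{H_0}}\mathcal{L}_{H_0}G_j=\tfrac{d}{ds}e^{s\mathcal{L}_{H_0}}G_j$ yields
\begin{equation*}
\langle\mathcal{L}_{H_0}G_j\rangle_0=\lim_{t\to\infty}\frac{e^{t\mathcal{L}_{H_0}}G_j-G_j}{t}=0,
\end{equation*}
the limit vanishing thanks to the assumed uniform boundedness of $\Phi^t_{H_0}$ (equivalently of $e^{t\mathcal{L}_{H_0}}G_j=G_j\circ\Phi^t_{H_0}$, by the exchange lemma). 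Thus $Z_j=\langle S_j\rangle_0$, in accordance with the zero-mean condition required by the previous display.

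\textbf{Determining $G_j$.} Set $f(\sigma):=e^{\sigma\mathcal{L}_{H_0}}(S_j-\langle S_j\rangle_0)$. Composing (\ref{eq:homo}) with $e^{\sigma\mathcal{L}_{H_0}}$ turns it into $\tfrac{d}{d\sigma}e^{\sigma\mathcal{L}_{H_0}}G_j=f(\sigma)$, which upon integration in $\sigma\in[0,s]$ gives $e^{s\mathcal{L}_{H_0}}G_j-G_j=\int_0^s f(\sigma)\,d\sigma$. Averaging this identity over $s\in[0,t]$ and swapping the order of integration by Fubini produces
\begin{equation*}
\frac{1}{t}\int_0^t e^{s\mathcal{L}_{H_0}}G_j\,ds-G_j=\frac{1}{t}\int_0^t(t-\sigma)f(\sigma)\,d\sigma=-\frac{1}{t}\int_0^t(\sigma-t)f(\sigma)\,d\sigma.
\end{equation*}
Passing to the limit $t\to\infty$ converts the first term into $\langle G_j\rangle_0$, and a simple rearrangement yields the claimed formula. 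In the $\tau$-periodic case, the identity $e^{\tau\mathcal{L}_{H_0}}G_j=G_j$ forces $\int_0^\tau f(\sigma)\,d\sigma=0$, so the same Fubini computation performed on $[0,\tau]$ makes the term linear in $\tau$ drop out, leaving $G_j=\langle G_j\rangle_0+\tfrac{1}{\tau}\int_0^\tau\sigma\,e^{\sigma\mathcal{L}_{H_0}}(S_j-\langle S_j\rangle_0)\,d\sigma$.

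The main obstacle lies in the general (non-periodic) case: for the Cesaro-type limit in (\ref{eq:ZjGj}) to exist one needs more than the mere zero-mean property of $f$, since its antiderivative along the flow need not be bounded. The uniform boundedness of $\Phi^t_{H_0}$ is the minimum hypothesis under which each step of the derivation is rigorous; in the strictly periodic regime—which is the only one relevant for the applications below—no convergence issue arises and the proof reduces to the elementary Fubini exchange displayed above.
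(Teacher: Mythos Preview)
Your proof is correct and follows essentially the same approach as the paper's: both push $e^{s\mathcal{L}_{H_0}}$ through the homological equation, use the invariance of $Z_j$ to get $Z_j=\langle S_j\rangle_0$ by averaging, and then extract $G_j$ by a weighted time-average. The only cosmetic difference is that the paper multiplies directly by the weight $(s-t)$ and integrates (an implicit integration by parts), whereas you first integrate the ODE in $\sigma$ and then apply Fubini—two equivalent ways of producing the same formula.
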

\begin{proof}
Applying $e^{s\mathcal{L}_{H_0}}$ to equation (\ref{eq:homo}), taking into account the invariance of 
$Z_j$ (by the definition of normal form), and taking the time average, one gets the first of
(\ref{eq:ZjGj}) in the limit. By the latter result, the homological equation becomes 
$\mathcal{L}_{H_0}G_j=S_j-\langle S_j\rangle_0$. Applying $(s-t)e^{s\mathcal{L}_{H_0}}$ to the latter equation and time averaging, one gets the second of (\ref{eq:ZjGj}) in the limit. \qed
\end{proof}
\begin{remark}
The generating Hamiltonians $G_j$ solving the homological equation are defined up to their
average along the flow of $H_0$, i.e. up to an arbitrary constant of motion of $H_0$. Thus, both the normal form Hamiltonian and the transformation bringing to it are not unique. In the sequel, we make the
choice $\langle G_j\rangle_0\equiv0$. 
\end{remark}

\begin{theorem}[Averaging Principle]
\label{prop:PerturbationGeneral}
The canonical transformation 
\[u\mapsto \tilde u=
e^{-\lambda^2 \mathcal{L}_2} e^{-\lambda \mathcal{L}_1} u\ ,
\]
generated by
\begin{equation}
\label{eq:G1G2}
\begin{split}
G_1= & \lim_{t\to\infty}\frac{1}{t}\int_0^t(s-t)e^{s \mathcal{L}_0} \left(H_1-\langle H_1\rangle_0\right)ds\ ; \\
G_2= & \lim_{t\to\infty}\frac{1}{t}\int_0^t(s-t)e^{s \mathcal{L}_0} 
\left(S_2-\langle S_2\rangle_0\right)ds\ ; \\
S_2:= &H_2+\frac{1}{2}\left\{H_1,G_1\right\}+\frac{1}{2}\{\langle H_1\rangle_0,G_1\}\ ,
\end{split}
\end{equation}
maps  the perturbed Hamiltonian $H=H_0+\lambda H_1+\lambda^2 H_2 + O(\lambda^3)$ into the normal form $\tilde H=e^{\lambda^2 \mathcal{L}_2} e^{\lambda \mathcal{L}_1} H=H_0+\lambda Z_1+\lambda^2 Z_2+O(\lambda^3)$, explicitly given by
\begin{equation}
\label{eq:nfgen}
\tilde H=H_0+\lambda \langle H_1\rangle_0+ \lambda^2\left(
\langle H_2\rangle_0 +\frac{1}{2}\langle \{H_1,G_1\}\rangle_0\right)+O(\lambda^3)\ .
\end{equation}
\end{theorem}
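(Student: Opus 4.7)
The plan is to match the two-step transformation in the statement against the general scheme developed in (\ref{eq:Htransf})--(\ref{eq:homo12}), then invoke Lemma~\ref{lem:homsol} at orders $\lambda$ and $\lambda^2$ to fix $G_1$ and $G_2$ explicitly, and finally simplify the order-$\lambda^2$ normal form using the first-order homological relation to reach (\ref{eq:nfgen}).

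First I would expand $\tilde H = e^{\lambda^2\mathcal{L}_2}e^{\lambda\mathcal{L}_1}H$ by the Lie series (\ref{eq:LieSeries}), collecting powers of $\lambda$ to recover the identifications in (\ref{eq:Z1Z2}). Using $\mathcal{L}_j H_0 = -\mathcal{L}_{H_0}G_j$, the normal-form requirements $\mathcal{L}_{H_0} Z_j = 0$ translate into homological equations of the form (\ref{eq:homo}) with $S_1 := H_1$ and $S_2 := H_2 + \mathcal{L}_1 H_1 + \frac{1}{2}\mathcal{L}_1^2 H_0$. Applying Lemma~\ref{lem:homsol} under the gauge $\langle G_j\rangle_0 = 0$ then yields $Z_j = \langle S_j\rangle_0$ and determines $G_j$ via the integral formulas in (\ref{eq:ZjGj}). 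At order one this is immediate: $Z_1 = \langle H_1\rangle_0$ and $G_1$ is given by the first line of (\ref{eq:G1G2}).

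At order two, $S_2$ must be rewritten in terms of the already known $G_1$. Substituting the first homological relation $\mathcal{L}_1 H_0 = \langle H_1\rangle_0 - H_1$ into $\mathcal{L}_1^2 H_0 = \{\mathcal{L}_1 H_0, G_1\}$ and combining with $\mathcal{L}_1 H_1 = \{H_1, G_1\}$ produces, after a brief rearrangement, $S_2 = H_2 + \frac{1}{2}\{H_1, G_1\} + \frac{1}{2}\{\langle H_1\rangle_0, G_1\}$, which matches the third line of (\ref{eq:G1G2}); the stated expression for $G_2$ then follows directly from (\ref{eq:ZjGj}).

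The step I expect to require the most care is the final simplification to $Z_2 = \langle H_2\rangle_0 + \frac{1}{2}\langle\{H_1, G_1\}\rangle_0$ asserted by (\ref{eq:nfgen}), which amounts to showing that $\langle\{\langle H_1\rangle_0, G_1\}\rangle_0 = 0$. I would combine relation (\ref{eq:flowcan}) with the flow-invariance $\langle H_1\rangle_0 \circ \Phi_{H_0}^s = \langle H_1\rangle_0$ provided by the time-average lemma to rewrite $\{\langle H_1\rangle_0, G_1\}\circ\Phi_{H_0}^s = \{\langle H_1\rangle_0, G_1\circ\Phi_{H_0}^s\}$; passing the time average through the second slot of the bilinear bracket, a step whose full rigour requires continuity of the Poisson bracket in the appropriate topology, collapses it to $\{\langle H_1\rangle_0, \langle G_1\rangle_0\} = 0$ by the chosen gauge. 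Substitution then yields (\ref{eq:nfgen}) and completes the argument.
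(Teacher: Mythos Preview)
Your proposal is correct and follows essentially the same route as the paper: solve the first homological equation via Lemma~\ref{lem:homsol}, substitute $\mathcal{L}_1 H_0=\langle H_1\rangle_0-H_1$ into the second to obtain $S_2$ in the stated form, and apply Lemma~\ref{lem:homsol} again. Your final paragraph actually goes beyond the paper's proof, which stops at $Z_2=\langle S_2\rangle_0$ without explicitly justifying the simplification to (\ref{eq:nfgen}); your argument that $\langle\{\langle H_1\rangle_0,G_1\}\rangle_0=\{\langle H_1\rangle_0,\langle G_1\rangle_0\}=0$ via (\ref{eq:flowcan}) and the gauge choice is a correct way to close that gap.
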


\begin{proof}
By Lemma \ref{lem:homsol}, solving the first of the homological equations (\ref{eq:homo12}) yields
$Z_1$ and $G_1$. By substituting $\mathcal{L}_1H_0=Z_1-H_1=\langle H_1\rangle_0-H_1$
into the right hand side of the second of the homological equations  (\ref{eq:homo12}), one gets
the latter in the form $\mathcal{L}_{H_0}G_2=S_2-Z_2$, with $S_2$ as in (\ref{eq:G1G2}). Solving by 
Lemma \ref{lem:homsol} again yields $Z_2$ and $G_2$. \qed
\end{proof}
\begin{remark}
As a matter of fact, in order to get the normal form Hamiltonian (\ref{eq:nfgen}), one does not need to compute $G_2$. This is a general fact: $Z_{j+1}$ depends on $G_1,\dots,G_j$.
\end{remark}

\section{Hamiltonian field theory close to $q_{tt}=q_{xx}$}

We now come back to our problem, and solve it by applying all the tools introduced in the previous Section.

Let us start by considering a Hamiltonian $H=\oint\mathcal{H} dx$, whose density $\mathcal{H}$ does not depend explicitly on $t$ and $x$ and is an analytic function of $q_x$, $p$ and their spatial derivatives up to a certain finite order, in the neighbourhood of the origin. Since $\mathcal{H}$ is invariant under time, space and $q$ translations, Theorem \ref{thm:Nother} (N\"other) applies.

\begin{proposition}
$H=\oint\mathcal{H} dx$, $I=\oint q_xp\ dx$ and $P=\oint p\ dx$ are the three first integrals corresponding to the symmetries $t\to t+s$, $x\to x+s$ and $q\to q+s$, respectively. Moreover,
$\{I,P\}=0$, so that the three first integrals are in involution.
\end{proposition}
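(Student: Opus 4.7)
The plan is to apply Noether's theorem (Theorem \ref{thm:Nother}) three times, once per symmetry, and then to compute the bracket $\{I,P\}_{q,p}$ by a direct calculation. For each symmetry I would (a) identify the candidate generating Hamiltonian, (b) verify that its Hamilton flow is exactly the stated one-parameter family of transformations, and (c) check that $H$ is invariant under this flow, so that Noether yields the corresponding conservation law.

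Time translation is immediate: $H$ generates its own flow and $\{H,H\}_{q,p}=0$, while time-independence of $\mathcal{H}$ gives $H\circ\Phi_H^s=H$ by conservation of energy. For space translation $x\mapsto x+s$, I would take $K=I=\oint q_xp\,dx$ and compute $\delta I/\delta q=-p_x$, $\delta I/\delta p=q_x$ from \eqref{eq:dFdu}, so the flow equations $q_s=q_x$, $p_s=p_x$ have the shift $(q,p)(x)\mapsto(q,p)(x+s)$ as their solution. Invariance of $H$ follows because $\mathcal{H}$ has no explicit $x$-dependence, so a change of integration variable on $\mathbb{T}$ absorbs the shift; Noether then gives $\{H,I\}_{q,p}=0$. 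For the internal translation $q\mapsto q+s$, I would take $K=P=\oint p\,dx$, observe that $\delta P/\delta q=0$, $\delta P/\delta p=1$, so the flow is $q_s=1$, $p_s=0$, which is precisely $q\mapsto q+s$, $p\mapsto p$. Since by hypothesis $\mathcal{H}$ depends on $q$ only through its space derivatives, $\mathcal{H}$ is unaffected by adding a constant to $q$, hence $H\circ\Phi_P^s=H$ and Noether gives $\{H,P\}_{q,p}=0$.

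Involution of $I$ and $P$ is then a one-line calculation from \eqref{eq:Pbra}:
\begin{equation*}
\{I,P\}_{q,p}=\oint\!\Bigl(\tfrac{\delta I}{\delta q}\tfrac{\delta P}{\delta p}-\tfrac{\delta I}{\delta p}\tfrac{\delta P}{\delta q}\Bigr)dx=-\oint p_x\,dx=0,
\end{equation*}
the last equality holding by periodicity on $\mathbb{T}$.

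There is no real obstacle here; the only place that requires a little care is the bookkeeping of signs in the variational derivatives (the minus sign coming from the single integration by parts in $\delta I/\delta q$) and the correct identification of the flow direction for the shift generated by $I$. Once these are fixed, Noether and a one-term Poisson bracket finish the proof.
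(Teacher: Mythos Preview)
Your proposal is correct and follows essentially the same route as the paper's own proof: identify the generating Hamiltonians $H$, $I$, $P$, verify that their flows realise the three symmetries, invoke N\"other, and then compute $\{I,P\}_{q,p}=-\oint p_x\,dx=0$ directly. The only difference is that you spell out the variational derivatives and the invariance argument a bit more explicitly than the paper does.
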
 

\begin{proof}
The conservation of $H$ is obvious. The Hamilton equations for $I$ at time $s$ are: $q_s=q_x$ and $p_s=p_x$,  whose solution is $q(t,x+s)$ and $p(t,x+s)$, clearly corresponding to the $x$-translation. 
The Hamilton equations for $P$ are $q_s=1$, $p_s=0$, solved by $q(t,x)+s$ and $p(t,x)$, corresponding to the $q$-translation. Finally, observe that
$\{I,P\}_{q,p}=\oint (\delta I/\delta q) (\delta P/\delta p)dx = -\oint p_x\ dx = 0$. \qed
\end{proof}

\begin{remark}
\label{P0}
One can always restrict the dynamics to the submanifold $P=\oint p\ dx=0$ by the canonical transformation $q=q'$, $p=P+p'$. 
\end{remark}

For the sake of convenience, we repeat below the definition of the class of Hamiltonian functionals considered, with the appropriate grading.

\begin{definition}\label{def:DefGrading}
The perturbative ordering of the Hamiltonian $H$ is defined by the following scaling:
\begin{equation}
H_\lambda:=\frac{1}{\lambda^4}\oint \mathcal{H}(\lambda^2q_{x},\lambda^2p,\lambda^3q_{xx},\lambda^3p_x,\dots,\lambda^6q_{xxxxx},\lambda^6p_{xxxx})\ dx\ .
\end{equation}
\end{definition} 

By Taylor expanding in powers of $\lambda$, close to $\lambda=0$, assuming without loss of generality that  
$\mathcal{H}|_{(q,p)=0}=0$, and taking into account Remark \ref{P0},
one gets
\begin{equation}\label{eq:HamiltonianGeneric}
H_\lambda=H_0+ \lambda H_1 +\lambda^2 H_2+\lambda^3 H_3 +\lambda^4 H_4 +\cdots\ , 
\end{equation}
where 
\begin{equation}
\label{eq:H0}
H_0=\oint \frac{a p^2+ b (q_x)^2}{2} dx +c I 
\end{equation}
with $a$, $b$ and $c$ some constants and $I=\oint q_xp\ dx$;
\begin{equation}
H_1=\oint d_1 q_xp_x\ dx\ ;
\end{equation}
\begin{eqnarray}
H_2=\oint&& \left[ e_1 (q_x)^3 +e_2 p^3 +e_3 (q_x)^2p +e_4 q_xp^2 + e_5(q_{xx})^2 +\right.\nonumber\\
+&& \left. e_6 (p_x)^2+e_7 q_{xx}p_x\right]\ dx\ ;
\end{eqnarray}
\begin{equation}
H_3=\oint\left[f_1 (q_x)^2p_x+ f_2 q_{xx}p^2 +f_3 q_{xx}p_{xx} \right]\ dx \, ;
\end{equation}
\begin{eqnarray}\label{eq:H4General}
H_4=\oint&&\left[g_1(q_x)^4+g_2p^4+g_3 (q_x)^2p^2+g_4(q_x)^3p+g_5q_xp^3\right. +\nonumber\\
+&&\left. g_6(q_{xx})^2q_x + g_7(q_{xx})^2p+g_8(p_x)^2q_x+g_9(p_x)^2p \right. +\nonumber\\
+&&\left. g_{10} q_{xxx} p^2+g_{11}(q_x)^2p_{xx}+g_{12}(q_{xxx})^2+g_{13}(p_{xx})^2 \right. +\nonumber\\
+&& \left. g_{14}q_{xxxx}p_x\right]\ dx\ ,
\end{eqnarray}
and so on. Here $d_1,e_1,\dots,g_{14}$ are given constants. 

\begin{remark}
Since $\mathcal{H}$ is independent of $x$, the density of each $H_j$ is independent of $x$. It follows that 
$\{I,H_j\}=0$ for any $j\geq0$.
\end{remark}

\begin{proposition}
\label{condab}
If the constants $a:=\partial^2\mathcal{H}/\partial p^2|_0$ and 
$b:=\partial^2\mathcal{H}/\partial(q_x)^2|_0$ appearing in (\ref{eq:H0}) are different from zero and have the same sign, there exists a time-dependent canonical transformation which brings the Hamiltonian $H_0$ in the canonical wave equation form $K_0=\frac{1}{2}\oint [p^2+(q_x)^2]dx$, and preserves the structure of the perturbations $H_j$ to any order $j\geq0$.
\end{proposition}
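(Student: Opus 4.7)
The plan is to compose three elementary canonical transformations that successively normalise the three undesirable features of $H_0$ --- the coefficients $a,b$ of the quadratic part, and the drift term $cI$ --- while keeping every perturbation $H_j$ in the same graded polynomial class of Definition \ref{def:DefGrading}.

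First I would perform a linear, $t$- and $x$-independent rescaling $q=\alpha Q$, $p=\beta P$ with $\alpha\beta=1$ (which makes it canonical). Choosing $\alpha=(a/b)^{1/4}$, $\beta=(b/a)^{1/4}$ --- well defined because $ab>0$, after possibly flipping the sign of $\mathcal{H}$ if both $a,b$ are negative --- converts $(ap^2+b q_x^2)/2$ into the isotropic form $\sqrt{ab}\,(P^2+Q_x^2)/2$. The bilinear term $cI$ is unchanged (since $I$ is scale invariant under $\alpha\beta=1$), and each $H_j$ is mapped to a functional whose density is a homogeneous polynomial of the same grade $j$, every monomial being multiplied by a fixed power of $\alpha,\beta$. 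Hence the grading is preserved.

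Next I would eliminate $cI$ by a Galilean boost. By the Proposition just proved, $I$ generates the $x$-translation flow $\Phi^s_I:(Q,P)\mapsto(Q(\cdot+s),P(\cdot+s))$, which is canonical because it is a Hamiltonian flow. Consider the time-dependent canonical transformation $\tilde u(t,x):=\Phi^{-ct}_I u(t,x)=u(t,x-ct)$. A direct computation based on the canonicity of $\Phi^s_I$ (equivalently, the standard rule for time-dependent canonical transformations) gives
\[
\tilde H \;=\; H\circ\Phi^{ct}_I \,-\, cI\,.
\]
Since the density of each $H_j$ is free of any explicit $x$-dependence, $H_j\circ\Phi^s_I=H_j$ for every $s\in\mathbb{R}$; hence $H\circ\Phi^{ct}_I=H$, and the boost exactly removes the $cI$ term while fixing every $H_j$. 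Finally, the time rescaling $\tau=\sqrt{ab}\,t$, equivalent to dividing the whole Hamiltonian by $\sqrt{ab}$, reduces the unperturbed part to exactly $K_0$; the coefficients of the $H_j$ get multiplied by $1/\sqrt{ab}$, but their polynomial form and grading are untouched.

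The step I expect to demand the most care is the Galilean boost, because its correctness relies on two independent ingredients: that $\Phi^s_I$ is canonical (giving the transformation law displayed above), and that every $H_j$ is strictly $x$-translation invariant (collapsing $H\circ\Phi^{ct}_I$ back to $H$). Both are already at our disposal --- the first because $\Phi^s_I$ is a Hamiltonian flow in the sense of Section 3, and the second because $\mathcal{H}$ has been assumed to have no explicit $x$-dependence --- so the argument should close without further technical input.
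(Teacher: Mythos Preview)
Your proof is correct and follows essentially the same route as the paper: a linear rescaling of $(q,p)$ to equalise the quadratic coefficients, followed by the time-dependent canonical transformation given by the flow $\Phi^{ct}_I$ to remove the drift term $cI$, using that every $H_j$ is $x$-translation invariant. The only cosmetic difference is that the paper packages the rescaling and the time rescaling into a single extended canonical transformation $(q,p,H,t)\mapsto(\sqrt{|a|}\,q',\sqrt{|b|}\,p',\sigma|ab|H',\sigma t')$, whereas you keep the symplectic rescaling $\alpha\beta=1$ separate from the final time rescaling.
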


\begin{proof}
Let $a=\sigma|a|$ and $b=\sigma|b|$, with $\sigma=\pm1$. One first performs the canonical rescaling 
$q=\sqrt{|a|}\ q'$, $p=\sqrt{|b|}\ p'$, $H=\sigma|ab| H'$, $t=\sigma t'$, which brings $H_0$ into $K_0+c'I$, where 
$c'=\sigma c/\sqrt{|ab|}$. Then one performs the transformation $(q',p')=\Phi_{c'I}^t(q'',p'')=\Phi_I^{c't}(q'',p'')$, where
$\Phi_I^t$ denotes the flow of $I=\oint q_xp\ dx$. The latter transformation is canonical and erases $c'I$. 
Clearly, both transformations do not change the structure of any $H_j$ nor the value of the coefficients of the Hamiltonians $H_1,\dots,H_4$. Observe that the flow of $I$ is the left translation of $(q,p)$, so that it is global and preserves the regularity of the initial condition. \qed
\end{proof}

\begin{remark}
Consider $K_0+\lambda H_1=\frac{1}{2}\oint[p^2+(q_x)^2+2\lambda d_1q_xp_x]dx$. Its Hamilton equations read
\[
q_t=p-\lambda d_1q_{xx}\ \ ;\ \ p_t=q_{xx}+\lambda d_1p_{xx}\ .
\]
Both $q$ and $p$ satisfy the linear Boussinesq equation
\[
u_{tt}=u_{xx}+(\lambda d_1)^2u_{xxxx}\ .
\]
\end{remark}

The condition on $a$ and $b$ in the Proposition~\ref{condab} above identifies the elliptic fixed points in the given class of Hamiltonians. One is then left with the problem of simplifying the dynamics of $K_0+\lambda H_1+\lambda^2H_2+\cdots$. The perturbations to various order have the structure listed above and no further simplification can be made, in general. However, there is a relevant class of Hamiltonians that display a much simpler structure, namely the class of \emph{mechanical Hamiltonians} of the form $\mathcal{H}=p^2/2+\mathcal{U}$, where $\mathcal{U}$ depends only on $q_x$ and its derivatives.
Such Hamiltonians usually arise as the continuum limit of some lattice system, the notable case being just that of the vibrating string. 

\begin{proposition}\label{prop:MechanicalHamiltonianGEN}
Suppose that $\mathcal{H}=p^2/2+\mathcal{U}(q_x,q_{xx},\dots,q_{xxxxx})$. Then, if the condition
$b:=\partial^2\mathcal{U}/\partial(q_x)^2|_0>0$ holds, $H_0$ can be brought in the canonical wave form $K_0$, $H_1=H_3\equiv0$, and
\[
H_2=\oint \left[\alpha_1 (q_x)^3 +\alpha_2(q_{xx})^2\right]\ dx\ ;
\]
\[
H_4=\oint\left[\beta_1(q_x)^4+\beta_2(q_{xx})^2q_x+\beta_3(q_{xxx})^2\right]\ dx\ .
\]
\end{proposition}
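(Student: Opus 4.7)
The plan is to combine the preceding proposition, applied to $H_0$, with a systematic enumeration of the graded monomials in $q_x$ and its $x$-derivatives, taken modulo exact $x$-derivatives (which integrate to zero on $\mathbb{T}$).

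First I would observe that since $\mathcal{H}=p^2/2+\mathcal{U}(q_x,q_{xx},q_{xxx},\ldots)$, all $p$-dependence sits in the single term $p^2/2$, whose grade is $4$; after the division by $\lambda^4$ in Definition~\ref{def:DefGrading} it contributes entirely to $H_0$ and nowhere else. In particular $\partial^2\mathcal{H}/\partial q_x\,\partial p|_0=0$, so the cross coefficient $c$ in \eqref{eq:H0} vanishes and
\[
H_0=\oint\left[\tfrac{1}{2}p^2+\tfrac{b}{2}(q_x)^2\right]dx,\qquad a=1,\ b>0.
\]
The preceding proposition then applies with $\sigma=+1$ and $c=0$, so that the $\Phi_I$-step is trivial and the remaining rescaling/time-reversal leave the polynomial structure of each $H_j$ untouched (only the coefficients change). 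Consequently, for every $j\ge 1$, $H_j$ is the integral over $\mathbb{T}$ of a polynomial in $q_x,q_{xx},q_{xxx},\ldots$ of homogeneous grade exactly $j+4$, with no $p$-dependence.

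The remaining task is purely combinatorial. For each grade $n\in\{5,6,7,8\}$ I would list every monomial $q_x^{a_1}q_{xx}^{a_2}q_{xxx}^{a_3}\cdots$ with $\sum_k(k+1)a_k=n$, and reduce them modulo $\oint\partial_x(\cdot)\,dx=0$. The reductions use only two facts: any purely linear monomial $q_{x^m}$ is itself $\partial_x(q_{x^{m-1}})$ and so integrates to zero, and the integration-by-parts identity $\oint u^{(i)}u^{(j)}\,dx=(-1)^i\oint u\,u^{(i+j)}\,dx$ (with its obvious trilinear analogue needed for $\oint(q_x)^2q_{xxx}\,dx$).

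Running through the four grades in turn: at grade~$5$ the candidates $q_xq_{xx}$ and $q_{xxxx}$ are both exact, so $H_1\equiv 0$; at grade~$7$ the candidates $(q_x)^2q_{xx}$, $q_{xx}q_{xxx}$, $q_xq_{xxxx}$, $q_{xxxxxx}$ are either exact or reduce by parts to $q_{xx}q_{xxx}$ which is exact, so $H_3\equiv 0$; at grade~$6$ the candidates $(q_x)^3$, $q_xq_{xxx}$, $(q_{xx})^2$, $q_{xxxxx}$ collapse, via $\oint q_xq_{xxx}\,dx=-\oint(q_{xx})^2\,dx$ and $\oint q_{xxxxx}\,dx=0$, to the claimed two-term form of $H_2$; at grade~$8$ the list $(q_x)^4$, $(q_x)^2q_{xxx}$, $q_x(q_{xx})^2$, $q_xq_{xxxxx}$, $q_{xx}q_{xxxx}$, $(q_{xxx})^2$, $q_{xxxxxxx}$ reduces, via $\oint(q_x)^2q_{xxx}\,dx=-2\oint q_x(q_{xx})^2\,dx$ and $\oint q_xq_{xxxxx}\,dx=\oint(q_{xxx})^2\,dx=-\oint q_{xx}q_{xxxx}\,dx$, to the claimed three-term form of $H_4$. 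The argument has no conceptual obstacle; the only care required is verifying the completeness of each grade-$n$ list, a finite task bounded by the rule $k+1\le n$ on the derivative orders that can appear.
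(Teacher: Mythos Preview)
Your proof is correct and follows essentially the same approach as the paper's. The paper's own proof is a single sentence—``The momentum $p$ cannot appear out of $H_0$, by definition''—which leans on the general expansions of $H_1,\dots,H_4$ already written out (with coefficients $d_1,e_i,f_i,g_i$) and simply crosses out every $p$-dependent monomial; what survives is exactly the claimed form. Your version re-derives that enumeration from scratch for the $p$-independent sector, making the integration-by-parts reductions explicit rather than citing the already-reduced list, but the underlying argument is the same.
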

\begin{proof}
The momentum $p$ cannot appear out of $H_0$, by definition. Notice that in this case there is no term proportional to
$I$ in $H_0$. \qed
\end{proof}
In the latter significant case one can obviously rename $H_2\to H_1$ and $H_4\to H_2$, $\lambda^2\to\lambda$.

\subsection{Traveling waves}

The equations of motion associated to $K_0=\oint \frac{p^2+(q_x^2)}{2}\ dx$ reduce to the wave-equation for the field $q$:
\begin{equation}
	q_t=p\ \ ;\ \ p_t=q_{xx}\ , \qquad \Longleftrightarrow \qquad  q_{tt}=q_{xx} \, .
\end{equation}
In order to simplify the analysis of perturbations of the wave equation, it is convenient to perform a change of variables that maps the functions $(q,p)$ into the \emph{Riemann invariants} $(u,v)$:
\begin{equation}
\label{eq:WavesRiemannInv}
		u \;=\; \dfrac{q_x+p}{\sqrt{2}}\ \ ;\ \ v \;=\; \dfrac{q_x-p}{\sqrt{2}}\ .
\end{equation}
The equations of motion for $u$ and $v$ are the left and right translation equation, respectively:
\begin{equation}
	\begin{cases}
		u_t=u_x \\
		v_t=-v_x
	\end{cases} \, .
\end{equation}
Indeed, the solution of the above system corresponding to the initial condition
$(u_0(x),v_0(x))$ is $(u_0(x+t),v_0(x-t))$, i.e. a rigid translation of the initial profiles. The flow of the wave equation, that is used to compute normal forms, is particularly manageable in these new variables, being a left translation for $u$ and a right translation for $v$ (at positive times).

The change of variables (\ref{eq:WavesRiemannInv}) is not canonical and it maps the standard Poisson tensor $\mathsf{J}_2$ into the Gardner tensor
\begin{equation}
\label{eq:Gard}
	J\;=\; \begin{pmatrix}
		\partial_x & 0 \\
		0 & -\partial_x
	\end{pmatrix} \, .
\end{equation}
In particular, as can be checked, formula (\ref{eq:transP}) for the transformation 
(\ref{eq:WavesRiemannInv}) reads
\[
D_{q,p}(u,v)\begin{pmatrix} 0 & 1 \\ -1 & 0 \end{pmatrix} D_{q,p}^T(u,v)=
\begin{pmatrix} \partial_x & 0 \\ 0 & -\partial_x \end{pmatrix}\ .
\]
The Hamiltonian $K_0$, expressed in terms of $(u,v)$ reads $K_0=\oint\frac{u^2+v^2}{2}\ dx$, so that the translation equations for $u$ and $v$ are the Hamilton equations associated to $K_0$ in the Gardner structure.

The explicit expression of the Hamiltonians \eqref{eq:H0}-\eqref{eq:H4General} in the $(u,v)$ variables is:
\begin{equation}
	K_0\;=\;\oint \frac{u^2+v^2}{2} \, dx \, ;
\end{equation}
\begin{equation}
	H_1\;=\; \oint \frac{d_1}{\sqrt{|ab|}} u v_x \, dx \, ;
\end{equation}
\begin{equation}
	\begin{split}
		H_2 \;=\; \oint &\Big\{ \frac{1}{2^{3/2}} \Big[ \big({\textstyle \frac{e_1}{|b|^{3/2}}+\frac{e_2}{|a|^{3/2}}+\frac{e_3}{|b| \sqrt{|a|}}+\frac{e_4}{|a| \sqrt{|b|}}}\big)u^3 \\
		&+\big({\textstyle \frac{e_1}{|b|^{3/2}}-\frac{e_2}{|a|^{3/2}}-\frac{e_3}{|b| \sqrt{|a|}}+\frac{e_4}{|a| \sqrt{|b|}}}\big)v^3\\
		&+\big({\textstyle \frac{3 e_1}{|b|^{3/2} }-\frac{3 e_2}{|a|^{3/2}}+\frac{e_3}{|b|\sqrt{|a|}}-\frac{e_4}{|a|\sqrt{|b|}}} \big)u^2 v \\
		&+\big({\textstyle \frac{3 e_1}{|b|^{3/2} }+\frac{3 e_2}{|a|^{3/2}}-\frac{e_3}{|b|\sqrt{|a|}}-\frac{e_4}{|a|\sqrt{|b|}}} \big)u v^2 \Big]+ \\
		&+\frac{1}{2} \Big[ \big({\textstyle \frac{e_5}{|b|} + \frac{e_6}{|a|} + \frac{e_7}{\sqrt{|ab|}}} \big)u_x^2 +\\
		&+\big({\textstyle \frac{e_5}{|b|} + \frac{e_6}{|a|} - \frac{e_7}{\sqrt{|ab|}}} \big)v_x^2 \Big] \Big\} \, dx \, .
	\end{split}
\end{equation}

\subsection{The generic case}

In order to perform a canonical transformation as stated in Proposition \ref{prop:PerturbationGeneral}, one has to compute time averages, as required in Theorem \ref{prop:PerturbationGeneral}. General formulas applying to the case of an unperturbed flow consisting of left/right translations are provided in the next lemma.

\begin{lemma}\label{lem:PropertiesIntegrals} Suppose that $f$ and $g$ are continuous functions on $\mathbb{T}$. Then 
\begin{eqnarray}
\oint \oint f(x\pm s) \, dx \, ds &=&\oint f(x) \, dx  \ ; \label{eq:IntegralU}\\
	\oint \oint f(x\pm s) g(x \mp s) \, dx \, ds &=&\oint f(x) \, dx \oint g(y) \, dy \ ;\label{eq:IntegralUV}\\
	\int_0^1 \oint s \, f(x\pm s) g(x \mp s) \, dx \, ds &=& \nonumber \\
	&\,&\!\!\!\!\!\!\!\!\!\!\!\!\!\!\!\!\!\!\!\!\!\!\!\!\!\!\!\!\!\!\!\!\!\!\!\!\!\!\!\!\!=\frac{1}{2} \oint f(x) \, dx \oint g(y) \, dy \pm \frac{1}{2} \oint g(x) \, \partial_x^{-1} f(x) \, dx \ , \label{eq:IntegralSUV}
\end{eqnarray}
where $\partial_x^{-1}f(x)$ denotes the unique primitive of $f$ with zero average on $\mathbb{T}$.
\end{lemma}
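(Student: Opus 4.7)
My plan is to prove all three identities by a single Fourier-series calculation on $\mathbb{T}$. Writing $f(x)=\sum_{k}\hat f_k e^{2\pi i k x}$ and $g(x)=\sum_{k}\hat g_k e^{2\pi i k x}$, the mixed product has the expansion
\begin{equation*}
f(x\pm s)\,g(x\mp s)\;=\;\sum_{k,l\in\mathbb{Z}}\hat f_k\hat g_l\,e^{2\pi i(k+l)x}\,e^{\pm 2\pi i(k-l)s},
\end{equation*}
so integrating over $x\in\mathbb{T}$ enforces $l=-k$ and leaves $\oint f(x\pm s)g(x\mp s)\,dx=\sum_{k}\hat f_k\hat g_{-k}\,e^{\pm 4\pi i k s}$. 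All three claims then reduce to evaluating either $\int_0^1 e^{\pm 4\pi i k s}\,ds$ or $\int_0^1 s\,e^{\pm 4\pi i k s}\,ds$ for $k\in\mathbb{Z}$.

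Identity (\ref{eq:IntegralU}) actually needs no Fourier: translation invariance of $\oint\,dx$ gives $\oint f(x\pm s)\,dx=\oint f(x)\,dx$ for every $s$, and the $s$-integral over $[0,1]$ contributes a factor of $1$. Identity (\ref{eq:IntegralUV}) follows because $\int_0^1 e^{\pm 4\pi i k s}\,ds=\delta_{k,0}$, so only the $k=0$ mode survives the $s$-integration, producing $\hat f_0\hat g_0=\oint f\,dx\,\oint g\,dy$. For (\ref{eq:IntegralSUV}), the $k=0$ contribution is $\tfrac{1}{2}\hat f_0\hat g_0$, matching $\tfrac{1}{2}\oint f\,dx\,\oint g\,dy$; for $k\neq 0$, integration by parts combined with $e^{\pm 4\pi i k}=1$ yields $\int_0^1 s\,e^{\pm 4\pi i k s}\,ds=\pm(4\pi i k)^{-1}$, so the tail sum equals $\pm\sum_{k\neq 0}\hat f_k\hat g_{-k}/(4\pi i k)$.

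To recognize this tail as $\pm\tfrac{1}{2}\oint g\,\partial_x^{-1} f\,dx$, I will fix at the outset the convention that $\partial_x^{-1}$ denotes the zero-mean primitive on $\mathbb{T}$, defined in Fourier by $\partial_x^{-1} f(x):=\sum_{k\neq 0}(2\pi i k)^{-1}\hat f_k e^{2\pi i k x}$. This is the natural inverse of $\partial_x$ on the subspace of zero-mean functions, which is precisely where the Gardner operator $J$ of the previous subsection is non-degenerate. Parseval then gives $\oint g\,\partial_x^{-1} f\,dx=\sum_{k\neq 0}\hat f_k\hat g_{-k}/(2\pi i k)$, matching the required expression and closing (\ref{eq:IntegralSUV}). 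The only (very mild) obstacle is this implicit convention on $\partial_x^{-1}$ together with consistent tracking of the upper/lower $\pm$-pairings, which comes out right because the oscillatory phase $e^{\pm 4\pi i k s}$ and the sign in front of the $\partial_x^{-1}$-term track one another automatically; no genuine analytic difficulty arises because the smoothness assumed in the applications of this paper amply justifies the interchange of sum and integral.
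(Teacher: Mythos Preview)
Your proof is correct and follows essentially the same Fourier-series route as the paper's own argument: expand the product, integrate over $x$ to force $l=-k$, then evaluate $\int_0^1 e^{\pm 4\pi i k s}\,ds$ and $\int_0^1 s\,e^{\pm 4\pi i k s}\,ds$ and identify the $k\neq 0$ tail with the Fourier multiplier of $\partial_x^{-1}$. The only cosmetic difference is that you obtain \eqref{eq:IntegralU} directly from translation invariance of $\oint dx$, whereas the paper derives it from \eqref{eq:IntegralUV} by setting $g\equiv 1$.
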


\begin{proof}
All these proofs consist of straightforward computations in Fourier space. First, we prove  \eqref{eq:IntegralUV}:
\[
	\begin{split}
		\oint \oint f(x\pm s) g(x \mp s) \, dx \, ds \;&=\; \int_0^1 \int_0^1 \sum_{k,k' \in \mathbb{Z}} \hat{f}_k \hat{g}_{k'} e^{2 \pi \imath k(x \pm s)} e^{2 \pi \imath k' (x \mp s)} \, dx \, ds \\
		&=\;\sum_{k,k' \in \mathbb{Z}} \hat{f}_k \hat{g}_{k'} \delta_{k+k',0} \delta_{k-k',0} \;=\; \hat{f}_0 \hat{g}_0 \, .
	\end{split}
\]
From here, \eqref{eq:IntegralU} follows by choosing $g=1$. In order to prove \eqref{eq:IntegralSUV}, we Fourier-transform the LHS:
\[
	\int_0^1 \oint s \, f(x\pm s) g(x \mp s) \, dx \, ds \;=\; \sum_{k \in \mathbb{Z}} \hat{f}_k \hat{g}_{-k} \int_0^1 s e^{\pm 4 \pi \imath k s } \, ds\ .
\]
It remains to notice that
\[
	\int_0^1 s e^{\pm 4 \pi \imath k s } \, ds \;=\; \delta_{k,0} \int_0^1 s \, ds + (1-\delta_{k,0}) \int_0^1 s e^{\pm 4 \pi \imath k s } \, ds = \frac{1}{2} \delta_{k,0} \pm \frac{1}{2} \frac{1}{2 \pi \imath k}(1-\delta_{k,0})
\]
and to recognise that $1/(2\pi \imath k)$ is the Fourier-multiplier corresponding to the operator $\partial_x^{-1}$. \hfill \qed
\end{proof}

\begin{proposition}
	There exists a (formal) near-to-identity, canonical transformation $(u,v) \mapsto (\hu,\hv)$ mapping $H_\lambda$ into	
	\begin{equation}
		\tilde{H}_\lambda \;=\; K_0 + \lambda^2 Z_2+O(\lambda^3)\ ,
	\end{equation}
	where
	\begin{equation}
		K_0 \;=\; \oint \frac{\hu^2+\hv^2}{2} \, dx \, ;
	\end{equation}
	\begin{equation}
		\begin{split}
			Z_2 \;=\; \oint &\Big\{ \frac{1}{2^{3/2}} \Big[ \big({\textstyle \frac{e_1}{|b|^{3/2}}+\frac{e_2}{|a|^{3/2}}+\frac{e_3}{|b| \sqrt{|a|}}+\frac{e_4}{|a| \sqrt{|b|}}}\big)\hu^3 + \\
		&+\big({\textstyle \frac{e_1}{|b|^{3/2}}-\frac{e_2}{|a|^{3/2}}-\frac{e_3}{|b| \sqrt{|a|}}+\frac{e_4}{|a| \sqrt{|b|}}}\big)\hv^3 \Big]+ \\
		&+\frac{1}{2} \Big[ \big({\textstyle \frac{e_5}{|b|} + \frac{e_6}{|a|} + \frac{e_7}{\sqrt{|ab|}}-\frac{d_1^2}{2|ab|}} \big)\hu_x^2 +\\
		&+\big({\textstyle \frac{e_5}{|b|} + \frac{e_6}{|a|} - \frac{e_7}{\sqrt{|ab|}}-\frac{d_1^2}{2|ab|}} \big)\hv_x^2 \Big] \Big\} \, dx \, .
		\end{split}
	\end{equation}
\end{proposition}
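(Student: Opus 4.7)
The plan is to apply the Averaging Principle (Theorem \ref{prop:PerturbationGeneral}) with $H_0=K_0$, working in the Gardner Poisson structure $J=\mathrm{diag}(\partial_x,-\partial_x)$. The unperturbed flow is the pair of pure translations $\Phi^s_{K_0}:(u(\cdot),v(\cdot))\mapsto(u(\cdot+s),v(\cdot-s))$, which is $1$-periodic, so all time averages reduce to integrals over $s\in[0,1]$ and Lemma \ref{lem:PropertiesIntegrals} supplies closed forms for every integral that appears. The whole argument is then essentially mechanical.

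I would first show $Z_1=\langle H_1\rangle_0=0$: applying (\ref{eq:IntegralUV}) to $H_1=\tfrac{d_1}{\sqrt{|ab|}}\oint uv_x\,dx$ gives $\tfrac{d_1}{\sqrt{|ab|}}\oint u\,dx\cdot\oint v_x\,dx$, which vanishes because $\oint v_x\,dx=0$. The first-order generating Hamiltonian then follows from the periodic form of Lemma \ref{lem:homsol},
\[
G_1=\int_0^1 s\,(H_1\circ\Phi^s_{K_0})\,ds,
\]
combined with equation (\ref{eq:IntegralSUV}) (on the $+$ branch, since $u$ carries $+s$ and $v_x$ carries $-s$); after one integration by parts against $\partial_x^{-1}$ this yields the compact expression
\[
G_1=-\tfrac{d_1}{2\sqrt{|ab|}}\oint uv\,dx.
\]
Since $\langle H_1\rangle_0=0$, the second-order normal form reduces to $Z_2=\langle H_2\rangle_0+\tfrac12\langle\{H_1,G_1\}_J\rangle_0$.

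Next I would compute $\langle H_2\rangle_0$ term by term: the diagonal cubic and derivative integrals $\oint u^3\,dx$, $\oint v^3\,dx$, $\oint u_x^2\,dx$, $\oint v_x^2\,dx$ are manifestly translation-invariant hence equal to their own averages; the mixed cubics $\oint u^2v\,dx$ and $\oint uv^2\,dx$ vanish by a short Fourier computation whose outcome is proportional to $\hat u_0$ or $\hat v_0$, both of which are zero on the submanifold $P=0$; and $\oint u_xv_x\,dx$ averages to zero by (\ref{eq:IntegralUV}). For the bracket, using (\ref{eq:dFdu}) one reads off
\[
\tfrac{\delta H_1}{\delta u}=\tfrac{d_1}{\sqrt{|ab|}}v_x,\quad \tfrac{\delta H_1}{\delta v}=-\tfrac{d_1}{\sqrt{|ab|}}u_x,\quad \tfrac{\delta G_1}{\delta u}=-\tfrac{d_1}{2\sqrt{|ab|}}v,\quad \tfrac{\delta G_1}{\delta v}=-\tfrac{d_1}{2\sqrt{|ab|}}u,
\]
and substitution into the Gardner bracket produces $\{H_1,G_1\}_J=-\tfrac{d_1^2}{2|ab|}\oint(u_x^2+v_x^2)\,dx$, which is already translation-invariant and therefore equals its own average. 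Summing with $\langle H_2\rangle_0$ reproduces precisely the stated $Z_2$, with the $-d_1^2/(2|ab|)$ shift appearing inside both quadratic derivative coefficients.

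The main obstacle is not analytical but organisational: one must carefully match the $\pm$ branch in Lemma \ref{lem:PropertiesIntegrals} to which variable drags $+s$ and which drags $-s$, and one must apply (\ref{eq:dFdu}) correctly to $uv_x$ so that $\delta/\delta v$ produces $-u_x$ rather than $u_x$. A single sign slip anywhere along the chain $G_1\to\{H_1,G_1\}_J\to Z_2$ would corrupt precisely the $d_1^2/(2|ab|)$ correction that distinguishes $Z_2$ from $\langle H_2\rangle_0$ and therefore from the uncorrected average of the perturbation.
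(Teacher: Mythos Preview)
Your proposal is correct and follows essentially the same route as the paper's proof: both apply Theorem~\ref{prop:PerturbationGeneral} with the $1$-periodic translation flow of $K_0$, use Lemma~\ref{lem:PropertiesIntegrals} to show $Z_1=0$ and to obtain $G_1=-\tfrac{d_1}{2\sqrt{|ab|}}\oint uv\,dx$, compute $\{H_1,G_1\}_J=-\tfrac{d_1^2}{2|ab|}\oint(u_x^2+v_x^2)\,dx$ in the Gardner structure, and then average $H_2+\tfrac12\{H_1,G_1\}$ term by term using the zero-mean condition on $u,v$. Your explicit tracking of the $u_xv_x$ cross term (which the paper's displayed $H_2$ omits but which is present before averaging) is a minor extra completeness.
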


\begin{proof} 

\noindent \emph{First perturbative step}: Using \eqref{eq:nfgen} and \eqref{eq:IntegralUV} one has $Z_1=0$:
\[
	\begin{split}
		Z_1 \;&=\; \int_0^1 e^{s \mathcal{L}_{H_0}} H_1 \, ds \\
			&=\; \int_0^1 \oint \frac{d_1}{\sqrt{|ab|}} u(x+s) v_x(x-s) \, dx \, ds \\
			&\overset{\text{\eqref{eq:IntegralUV}}}{=}\; \frac{d_1}{\sqrt{|ab|}}\oint u(x) \, dx \oint v_y(y) \, dy \;=\; 0 
	\end{split}
\]
where in the last step we used that the $v_y$ has zero-average.

\noindent \emph{Additional term at second order}: We need the expression of $G_1$ to compute $Z_2$. Using \eqref{eq:nfgen} and \eqref{eq:IntegralSUV} we have
\[
	\begin{split}
		G_1\;&=\; \int_0^1 s e^{s \mathcal{L}_{H_0}} H_1 \, ds \\
			&=\;\frac{d_1}{\sqrt{|ab|}}  \int_0^1 \oint s \, u(x+s) v_x(x-s) \, dx \, ds \\
			&\overset{\text{\eqref{eq:IntegralSUV}}}{=}\; -\frac{d_1}{2\sqrt{|ab|}} \oint u v \, dx \, .
	\end{split}
\]
The computation of functional derivatives yields:
\[
	\frac{\delta G_1}{\delta u} \;=\; - \frac{d_1}{2 \sqrt{|ab|}} v \, ; \qquad \frac{\delta G_1}{\delta v} \;=\; -\frac{d_1}{2 \sqrt{|ab|}} u \, ;
\]
\[
	\frac{\delta H_1}{\delta u} \;=\; \frac{d_1}{\sqrt{|ab|}} v_x \, ; \qquad \frac{\delta H_1}{\delta v} \;=\; -\frac{d_1}{\sqrt{|ab|}} u_x\ ,
\]
and one finally obtains
\[
	\begin{split}
		\{H_1,G_1\} \;&=\; \oint \left(\frac{\delta H_1}{\delta u} \partial_x \frac{\delta G_1}{\delta u} - \frac{\delta H_1}{\delta v} \partial_x \frac{\delta G_1}{\delta v} \right) \, dx \\
		&=\;-\frac{d_1^2}{2 |ab|} \oint \left(v_x^2+u_x^2 \right) \, dx \, .
	\end{split}
\]

\noindent \emph{Computation of the second order normal form:} Using \eqref{eq:nfgen}, one has to time-average (with respect to the unperturbed flow of $K_0$) the following expression:
\[
	\begin{split}
		H_2+\frac{1}{2} &\{H_1-Z_1,G_1\} \;=\; \oint \Big\{ \frac{1}{2^{3/2}} \Big[ \big({\textstyle \frac{e_1}{|b|^{3/2}}+\frac{e_2}{|a|^{3/2}}+\frac{e_3}{|b| \sqrt{|a|}}+\frac{e_4}{|a| \sqrt{|b|}}}\big)u^3 \\
		&+\big({\textstyle \frac{e_1}{|b|^{3/2}}-\frac{e_2}{|a|^{3/2}}-\frac{e_3}{|b| \sqrt{|a|}}+\frac{e_4}{|a| \sqrt{|b|}}}\big)v^3+\big({\textstyle \frac{3 e_1}{|b|^{3/2} }-\frac{3 e_2}{|a|^{3/2}}+\frac{e_3}{|b|\sqrt{|a|}}-\frac{e_4}{|a|\sqrt{|b|}}} \big)u^2 v \\
		&+\big({\textstyle \frac{3 e_1}{|b|^{3/2} }+\frac{3 e_2}{|a|^{3/2}}-\frac{e_3}{|b|\sqrt{|a|}}-\frac{e_4}{|a|\sqrt{|b|}}} \big)u v^2 \Big]+ \\
		&+\frac{1}{2} \Big[ \big({\textstyle \frac{e_5}{|b|} + \frac{e_6}{|a|} + \frac{e_7}{\sqrt{|ab|}}-\frac{d_1^2}{2 |ab|}} \big)u_x^2 +\big({\textstyle \frac{e_5}{|b|} + \frac{e_6}{|a|} - \frac{e_7}{\sqrt{|ab|}}-\frac{d_1^2}{2 |ab|}} \big)v_x^2 \Big] \Big\} \, dx\ .
	\end{split}
\]
As a consequence of \eqref{eq:IntegralUV} and under the assumption of $\oint u \, dx = \oint v \, dx =0$:
\[
	\int_0^1 \oint u^2(x+s) v(x-s) \, dx \, ds \;=\; \left(\oint u^2(x) \, dx \right) \left( \oint v(x) \, dx \right) \;=\; 0\ ;
\]
\[
	\int_0^1 \oint u(x+s) v^2(x-s) \, dx \, ds \;=\; \left(\oint u(x) \, dx \right) \left( \oint v^2(x) \, dx \right) \;=\; 0\ .
\]
Moreover
\[
	\int_0^1 \oint u^3(x+s) \, dx \, ds \;=\; \oint u^3(x) \, dx\ ;
\]
\[
	\int_0^1 \oint v^3(x+s) \, dx \, ds \;=\; \oint v^3(x) \, dx\ ;
\]
\[
	\int_0^1 \oint u_x^2 (x+s) \, dx \, ds \;=\; \oint u_x^2(x) \, dx\ ;
\]
\[
	\int_0^1 \oint v_x^2 (x+s) \, dx \, ds \;=\; \oint v_x^2(x) \, dx\ ,
\]
and this completes the proof. \hfill \qed
\end{proof}

\begin{remark}
$\tilde{H}_\lambda$ is \emph{always} the Hamiltonian of a pair of counter-propagating Korteweg-de Vries equations (up to a small remainder), i.e. its vector field $J\nabla\tilde H_\lambda$ is of the form (\ref{eq:ord2}). Such a result is somehow expected from, and in agreement with the existing results treating particular cases in the literature, among which those concerning the FPU problem (starting with the seminal work of Zabusky and Kruskal \cite{ZK}) and the propagation of surface water waves (where the first deduction of the KdV equation goes back to Boussinesq
\cite{Bous}). 
\end{remark}

\subsection{The mechanical case}\label{subsec:MechanicalHamiltonians}

For mechanical Hamiltonians of the form $\mathcal{H}=p^2/2+\mathcal{U}$, where $\mathcal{U}$ depends on $q_x$ and its derivatives, starting from Proposition \ref{prop:MechanicalHamiltonianGEN} and repeating the analysis made in the general case, we perform the change of variables $(q,p) \mapsto (u,v)$, which yields
\begin{equation}
	K_0\;=\; \oint\frac{u^2+v^2}{2} \, dx \, ,
\end{equation}
\begin{equation}
	H_2\;=\; \oint\left[\frac{\alpha_1}{2^{3/2}} \left(u^3+3u^2v+3 u v^2 + v^3 \right)+\frac{\alpha_2}{2} \left((u_x)^2+2u_xv_x+(v_x)^2 \right)\right] \, dx \, ,
\end{equation}
\begin{equation}
	\begin{split}
		H_4\;&=\;\oint \Big\{\beta_1 \left[\frac{u^4+4u^3v+6u^2v^2+4u v^3+v^4}{4}\right]\\
		&\quad+\beta_2 \left[\frac{(u_x)^2+2u_x v_x +(v_x)^2}{2} \right]\frac{u+v}{\sqrt{2}}\\
		&\quad +\frac{\beta_3}{2} [(u_{xx})^2+2u_{xx} v_{xx}+(v_{xx})^2] \Big\} \, dx \, .
	\end{split}
\end{equation}

\begin{proposition}\label{prop:GenericNormalForm}
	There exists a (formal) near-to-identity, canonical transformation $(u,v) \mapsto (\hu,\hv)$ mapping $H_\lambda$ into	
	\begin{equation}
		\tilde{H}_\lambda \;=\; K_0 + \lambda^2 Z_2+\lambda^4 Z_4 +O(\lambda^6)\ ,
	\end{equation}
	where
	\begin{equation}
		K_0 \;=\; \oint \frac{\hu^2+\hv^2}{2} \, dx \, ,
	\end{equation}
	\begin{equation}
			Z_2 \;=\; \oint \left[ \frac{\alpha_1}{2^{3/2}}(\tilde{u}^3+\tilde{v}^3)+\frac{\alpha_2}{2} (\tilde{u}_x^2+\tilde{v}_x^2) \right] \, dx \, ,
	\end{equation}
	\begin{equation}
		\begin{split}
			Z_4 \;=\; &\oint \Big\{ \Big(\frac{\beta_1}{4}-\frac{9\alpha_1^2}{16} \Big)(\tilde{u}^4+\tilde{v}^4)+\Big(\frac{\beta_2}{2^{3/2}}-\frac{3\alpha_1\alpha_2}{\sqrt{2}} \Big)\big[\tilde{u}(\tilde{u}_x)^2+\tilde{v} (\tilde{v}_x)^2 \big]+ \\
			& +\Big(\frac{\beta_3}{2}-\frac{\alpha_2^2}{2} \Big) \big[ (\tilde{u}_{xx})^2+(\tilde{v}_{xx})^2 \big] \Big\} \, dx +\Big(\frac{3 \beta_1}{2}-\frac{9 \alpha_1^2}{2} \Big) \langle \tilde{u}^2 \rangle \langle \tilde{v}^2 \rangle+\\
			&+\frac{9 \alpha_1^2}{16} \big( \langle \tilde{u}^2 \rangle^2+\langle \tilde{v}^2 \rangle^2 \big) \, .
		\end{split}
	\end{equation}
\end{proposition}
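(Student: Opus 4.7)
The plan is to apply the Averaging Principle (Theorem~\ref{prop:PerturbationGeneral}) in two perturbative steps, at orders $\lambda^2$ and $\lambda^4$, working throughout in the Gardner Poisson structure $J=\mathrm{diag}(\partial_x,-\partial_x)$. Because in the mechanical case $H_1\equiv H_3\equiv 0$, the homological equations at orders $\lambda$ and $\lambda^3$ admit the trivial solution $Z_1=Z_3=0$, $G_1=G_3=0$, so the near-identity transformation reduces to $e^{-\lambda^4\mathcal{L}_4}e^{-\lambda^2\mathcal{L}_2}$ with $\mathcal{L}_j:=\{\cdot,G_j\}_J$. Expanding exactly as in the derivation of \eqref{eq:homo12} I would obtain the two homological equations
\[
\mathcal{L}_{K_0}G_2=H_2-Z_2,\qquad \mathcal{L}_{K_0}G_4=H_4+\tfrac12\{H_2+Z_2,G_2\}_J-Z_4.
\]
The flow of $K_0$ is the $1$-periodic rigid translation $(u(x),v(x))\mapsto(u(x+s),v(x-s))$, so all time averages reduce to the integrals covered by Lemma~\ref{lem:PropertiesIntegrals}. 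I will also use the zero-mean conditions $\oint u\,dx=\oint v\,dx=0$, inherited from $\oint p\,dx=0$ and the periodicity of $q$.

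At order $\lambda^2$, substituting the $(u,v)$-expression of $H_2$ into $Z_2=\langle H_2\rangle_0$, the diagonal monomials $u^3,v^3,u_x^2,v_x^2$ are invariant under the translation flow and survive the averaging via \eqref{eq:IntegralU}, while the mixed monomials $u^2v$, $uv^2$ and $u_xv_x$ factor by \eqref{eq:IntegralUV} into products containing $\oint u$ or $\oint v$ and therefore drop out. This gives the stated $Z_2$. The generator is then computed by Lemma~\ref{lem:homsol},
\[
G_2=\int_0^1 s\,e^{s\mathcal{L}_{K_0}}(H_2-Z_2)\,ds,
\]
and by \eqref{eq:IntegralSUV} it turns out to be an explicit non-local functional involving $\partial_x^{-1}$ applied to $u$, $v$, $u^2$ and $v^2$ (with $\partial_x^{-1}$ interpreted as the Fourier multiplier $(2\pi ik)^{-1}$ on zero-mean functions).

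At order $\lambda^4$ I would assemble $S_4:=H_4+\tfrac12\{H_2+Z_2,G_2\}_J$, compute the Gardner bracket from the functional derivatives of $H_2$, $Z_2$ and $G_2$, and evaluate $Z_4=\langle S_4\rangle_0$. After integrations by parts in $x$ and use of $\partial_x\partial_x^{-1}f=f-\langle f\rangle$, the bracket splits into a local polynomial density in $u,v$ and their derivatives plus non-local remainders proportional to the Casimirs $\oint u^2\,dx$ and $\oint v^2\,dx$ of the Gardner structure. Averaging eliminates, again by \eqref{eq:IntegralUV} and the zero-mean property of $u,v$, all the surviving cross monomials; the diagonal self-interactions provide the local densities $\tilde u^4+\tilde v^4$, $\tilde u\tilde u_x^2+\tilde v\tilde v_x^2$, $\tilde u_{xx}^2+\tilde v_{xx}^2$ with the corrected coefficients $\beta_1/4-9\alpha_1^2/16$, $\beta_2/2^{3/2}-3\alpha_1\alpha_2/\sqrt2$ and $\beta_3/2-\alpha_2^2/2$, while the genuinely bilinear expressions $u^2(x+s)v^2(x-s)$ coming from the $u^2v^2$ term in $H_4$ and from $\{H_2,G_2\}_J$ yield, via \eqref{eq:IntegralUV}, the non-local combinations $\langle\tilde u^2\rangle\langle\tilde v^2\rangle$ and $\langle\tilde u^2\rangle^2+\langle\tilde v^2\rangle^2$ with the stated prefactors $3\beta_1/2-9\alpha_1^2/2$ and $9\alpha_1^2/16$. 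The generator $G_4$ is then fixed by Lemma~\ref{lem:homsol} but is not needed to display $\tilde H_\lambda$.

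The main obstacle will be the algebraic bookkeeping at order $\lambda^4$. The non-locality of $G_2$ (through $\partial_x^{-1}$) causes the Gardner bracket $\{H_2+Z_2,G_2\}_J$ to produce many terms in which the leading $\partial_x$ of the Poisson tensor either cancels an inverse derivative, restoring a local density, or leaves a residue proportional to a Casimir. Correctly tracking which monomials recombine into the local densities of $Z_4$ and which contribute to the non-local $\langle\tilde u^2\rangle$- and $\langle\tilde v^2\rangle$-dependent coefficients, while keeping all numerical prefactors under control, is where essentially all the real work lies; the remaining steps are routine applications of Theorem~\ref{prop:PerturbationGeneral} and the three identities of Lemma~\ref{lem:PropertiesIntegrals}.
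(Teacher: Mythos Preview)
Your proposal is correct and follows essentially the same route as the paper: compute $Z_2=\langle H_2\rangle_0$ and the generator $G_2$ via Lemma~\ref{lem:homsol} together with the integral identities of Lemma~\ref{lem:PropertiesIntegrals}, then obtain $Z_4$ by averaging the fourth-order source built from $H_4$ and the Gardner bracket $\{H_2,G_2\}_J$. The paper writes the source as $\langle H_4\rangle_0+\langle\{H_2-Z_2,G_2\}\rangle_0$ rather than your $\langle H_4\rangle_0+\tfrac12\langle\{H_2+Z_2,G_2\}\rangle_0$, but since $\langle\{Z_2,G_2\}\rangle_0=0$ this is only a cosmetic reorganisation (modulo how the factor $\tfrac12$ is tracked), and the explicit functional-derivative computation it carries out is precisely the bookkeeping you identify as the main obstacle.
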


\begin{proof}
	\emph{First perturbative step:} using Proposition \ref{prop:PerturbationGeneral} we have
	\[
		\begin{split}
			Z_2\;&=\; \int_0^1 e^{s \mathcal{L}_{H_0}} H_2 \, ds \\
		&\overset{\text{\eqref{eq:IntegralUV}}}{=}\;\oint \Big\{ \frac{\alpha_1}{2^{3/2}}(u^3+v^3)+\frac{\alpha_2}{2}[(u_x)^2+(v_x)^2] \Big\} \, dx \\
		&+\frac{3 \alpha_1}{2^{3/2}} \big(\langle u^2 \rangle \langle v \rangle + \langle u \rangle \langle v^2 \rangle \big)\ ;
		\end{split}
	\]
	here the last term vanishes because $\langle u\rangle=\oint u \, dx =0$ and $\langle v\rangle=\oint v \, dx = 0$.
	
	\noindent \emph{Generator of the the first order transformation}: 
	\[
		\begin{split}
			G_2 \;&=\; \int_0^1 s e^{s \mathcal{L}_{H_0}} (H_2-Z_2) \, ds \\
				&=\int_0^1 \oint s \Big\{\frac{3 \alpha_1}{2^{3/2}} \big[u^2(x+s) v(x-s)+ u(x+s) v^2(x-s)\big]\\
				&\qquad +\alpha_2 u_x(x+s)v_x(x-s) \Big\} \, dx \, ds\\
				&\overset{\text{\eqref{eq:IntegralSUV}}}{=}\frac{3 \alpha_1}{2^{5/3}} \big( \langle u^2 \rangle \langle v \rangle + \langle u \rangle \langle v^2 \rangle \big)+\frac{3 \alpha_1}{2^{5/2}} \Big( \oint v^2 \partial_x^{-1} u \, dx + \oint v \partial_x^{-1} u^2 \, dx \Big)\\
				&\qquad + \frac{\alpha_2}{2} \oint u v_x \, dx \\
				&=\;\frac{3 \alpha_1}{2^{5/2}} \Big( \oint v^2 \partial_x^{-1} u \, dx + \oint v \partial_x^{-1} u^2 \, dx \Big) + \frac{\alpha_2}{2} \oint u v_x \, dx\ ,
		\end{split}
	\]
	where in the last step we used $\langle u \rangle = 0$ and $\langle v \rangle=0$.
	Making use of the functional derivatives
	\[
		\frac{\delta G_1}{\delta u} \;=\; \frac{3 \alpha_1}{2^{3/2}} \Big[-u \partial_x^{-1} v - \frac{1}{2} \partial_x^{-1} v^2 \Big] + \frac{\alpha_2}{2} v_x \, ;
	\]
	\[
		\frac{\delta G_1}{\delta v} \;=\; \frac{3 \alpha_1}{2^{3/2}} \Big[\frac{1}{2}\partial_x^{-1} u^2+v \partial_x^{-1} u \Big] - \frac{\alpha_2}{2}u_x\, ;
	\]
	\[
		\frac{\delta (H_2-Z_2)}{\delta u} \;=\; \frac{3 \alpha_1}{2^{3/2}} \big[ 2u v + v^2\big]-\alpha_2 v_{xx}\, ;
	\]
	\[
			\frac{\delta (H_2-Z_2)}{\delta v} \;=\; \frac{3 \alpha_1}{2^{3/2}} \big[ u^2+ 2u v\big]-\alpha_2 u_{xx}\, ,
	\]
	we can compute the Poisson bracket
	\[
		\begin{split}
			\{ H_2-Z_2,G_2\} \;&=\; \oint \Big[\frac{\delta (H_2-Z_2)}{\delta u} \partial_x \frac{\delta G_2}{\delta u} - \frac{\delta (H_2-Z_2)}{\delta v} \partial_x \frac{\delta G_2}{\delta v} \Big] \, dx \, .
		\end{split}
	\]
	Since we don't need its full expression, we can use \eqref{eq:IntegralUV} to simplify computations and consider only those terms that don't vanish after taking the average with respect to the flow of $K_0$. We obtain
	\[
		\begin{split}
			\Big\langle \{H_2-Z_2,G_2\} \Big \rangle_{0} \;&=\;\oint \Big[-\frac{9 \alpha_1^2}{16} \big( u^4+v^4 \big)-\frac{3\alpha_1\alpha_2}{\sqrt{2}} \big(u_x^2 u+v_x^2 v\big)+ \\
			&-\frac{\alpha_2^2}{2}\big( (u_{xx})^2+(v_{xx})^2 \big) \Big] \, dx+\frac{9 \alpha_1^2}{16} \big( \langle u^2 \rangle^2+\langle v^2 \rangle^2 \big)\\ & -\frac{9 \alpha_1^2}{2} \langle u^2 \rangle \langle v^2 \rangle\ ,
		\end{split}
	\]
	whereas
	\[
		\begin{split}
			\langle H_4 \rangle_{0} \;&=\; \oint \Big\{ \frac{\beta_1}{4}\big(u^4+v^4\big)+\frac{\beta_2}{2^{3/2}} \big[ u(u_x)^2+v(v_x)^2 \big]+\frac{\beta_3}{2} \big[ (u_{xx})^2+(v_{xx})^2 \big]\Big\} \, dx \\
			&+\frac{3 \beta_1}{2} \langle u^2 \rangle \langle v^2 \rangle \, .
		\end{split}
	\]
Summing the right hand sides of the two previous equations we get
	\[
		\begin{split}
			Z_4 \;&=\; \oint \Big\{ \Big(\frac{\beta_1}{4}-\frac{9\alpha_1^2}{16} \Big)(u^4+v^4)+\Big(\frac{\beta_2}{2^{3/2}}-\frac{3\alpha_1\alpha_2}{\sqrt{2}} \Big)\big[u(u_x)^2+v (v_x)^2 \big]+ \\
			&\qquad +\Big(\frac{\beta_3}{2}-\frac{\alpha_2^2}{2} \Big) \big[ (u_{xx})^2+(v_{xx})^2 \big] \Big\} \, dx +\Big(\frac{3 \beta_1}{2}-\frac{9 \alpha_1^2}{2} \Big) \langle u^2 \rangle \langle v^2 \rangle\\& \qquad+\frac{9 \alpha_1^2}{16} \big( \langle u^2 \rangle^2+\langle v^2 \rangle^2 \big) \, .
		\end{split}
	\]
	\noindent
	\qed
\end{proof}
Here, as in the generic case, $Z_2$ is in the KdV hierarchy, i.e. the vector field $J\nabla(\tilde K_0+\lambda^2 Z_2)$ has the form of the right-hand side of 
(\ref{eq:ord2}). On the other hand, $Z_4$ is not, in general, in the KdV hierarchy: the two components of its vector field $J\nabla Z_4$ are not proportional to $\kappa_5$ (as defined in (\ref{eq:kappa5})), which is due to the impossibility to fit all the required constraints on its parameters, in general. However, it is still possible to get a dynamics within the KdV hierarchy to order 
$\lambda^4$ by applying the Kodama normalisation procedure to the vector field $J\nabla(\tilde K_0+\lambda^2 Z_2+\lambda^4 Z_4)$. Although such a normalisation is noncanonical, in principle, it actually yields a system of equations in the form (\ref{eq:ord4}). Neglecting the remainder, these equations turn out to be Hamiltonian a fortiori, with the correct Gardner-Poisson tensor (\ref{eq:Gard}).
The deep reason behind this fact is far from being deeply understood, at present.

Concrete examples are discussed in the next Section \ref{sec:Applications}, where we also provide an explicit example of Kodama transformation. 

\section{Applications}
\label{sec:Applications}

\subsection{The Fermi-Pasta-Ulam problem}\label{subsec:FPU}
The Fermi-Pasta-Ulam (FPU) chain consists of $N$ identical (unit) masses connected by nonlinear springs to their nearest neighbours. The dynamics is generated by the Hamiltonian
\begin{equation}
\label{eq:FPUHam}
	H=\;\sum_{j\in \mathbb{Z}_N} \left[\frac{p_j^2}{2}+\phi(q_{j+1}-q_j) \right]\ ,
\end{equation}
where $\mathbb{Z}_N:=\mathbb{Z}/(N\mathbb{Z})$, and $\phi$ is the potential
\begin{equation}\label{eq:PhiPotential}
	\phi(z) \;:=\; \frac{z^2}{2}+\alpha \frac{z^3}{3}+\beta \frac{z^4}{4}+O(z^5)\ ,
\end{equation}
and $\alpha$,$\beta$,\dots are the parameters measuring the strength of the nonlinear terms. One usually refers to the \emph{$\alpha$-model} if $\alpha$ is the only non-zero parameter; to the \emph{$\beta$-model} if $\beta$ is the only non-zero parameter; to the \emph{$\alpha+\beta$-model} if both $\alpha$ and $\beta$ are non-zero, and to the \emph{generalised FPU model} if the lowest degree of the nonlinearity is greater than  or equal $5$.

When all the parameters in the nonlinearity are set to zero, the Hamiltonian (\ref{eq:FPUHam}) reduces to that of a \emph{harmonic chain}, where particles interact through linear forces only. The latter system is integrable in the sense of Liouville, and the Hamiltonian is diagonalised by the (discrete) Fourier transform
\begin{equation}
	p_j \;=\; \frac{1}{\sqrt{2N}} \sum_{k=-N}^N \hat{p}_k e^{\imath \pi \frac{j k}{N}}\ ,
\end{equation}
and similarly for $q_j$. The integrals of motion are the energies of the Fourier modes
\begin{equation}
	E_k \;=\; \frac{|\hat{p}_k|^2+ \omega_k^2 |\hat{q}_k|^2}{2} \, , \qquad k=-N,\dots,N-1\ ,
\end{equation}
where $\omega_k:=2 \big| \sin\big(\frac{k \pi}{2N} \big) \big|$ are the proper frequencies of oscillation. Observe that $E_k=E_{-k}$, for all $k$.

The nonlinear model (\ref{eq:FPUHam}) was introduced by Fermi, Pasta and Ulam (FPU), supported by Tsingou \cite{FPU1955}, with the purpose of analysing its thermalisation process. The authors expected that the interaction between the Fourier modes due to the nonlinear terms, and the consequent energy sharing between them, would have brought the system to reach the thermal equilibrium on a short time-scale. In particular, as a detector of thermal equilibrium, they expected to observe the ``equipartition of energy'', i.e. a final state of the system where, on time-average, all Fourier energies have almost the same value, i.e. $E_k \simeq E/N$, where $E$ is the total energy. Their numerical simulations showed instead a completely different scenario: by initially exciting the lowest frequency mode ($k=1$), within their available computation time, energy sharing was observed to effectively take place only among the first few modes and, instead of a continuous trend to equipartition, the dynamics showed an almost recurrent behaviour. The first explanation of the latter phenomenon goes back to Zabusky and Kruskal \cite{ZK}, who approximated the traveling wave dynamics of the system by the KdV equation, and based their argument on the recurrent behaviour of its solitons. On the Hamiltonian side, the first correct computation of 
the resonant normal form of the lattice system, in action angle-variables, is due to \cite{Shepelyansky}. Such a construction was only later recognised to include that of Zabusky and Kruskal \cite{BamPo,BLP}. 

Nowadays, it is well known that a key role in the explanation of the FPU phenomenon, or paradox, is played by the \emph{integrability} of the resonant normal form either of the lattice system or of its infinite-dimensional approximation (we refer to \cite{Gallavotti-2008,Bambusi2015} and references therein). Indeed, the KdV equation admits a complete set of (infinitely many) integrals of motion, whose conservation prevents a fast energy sharing among the Fourier modes. Moreover, the preservation of the analyticity of the initial condition causes an exponential decay of its Fourier energies \cite{KaPo}. These two aspects resemble very much the observations in the FPU experiment. 

In fact, the connection FPU-KdV can be made rigorous using the normal form construction of Theorem \ref{prop:PerturbationGeneral}, as follows. As a preliminary step, we perform the canonical change of variables $(q,p) \mapsto (s,r)$ defined by the generating function
\begin{equation}
	F(q,s) \;=\;\sum_{j \in \mathbb{Z}_N} s_j(q_j-q_{j+1})\ ,
\end{equation}
which gives
\begin{equation}
	\begin{split}
		r_j \;&= \;-\frac{\partial F}{\partial s_j}=q_{j+1}-q_j \, ,\\
		p_j \;&=\; \frac{\partial F}{\partial q_j}=s_j-s_{j-1} \, .
	\end{split}
\end{equation}
In terms of the new variables $(s,r)$ the Hamiltonian (\ref{eq:FPUHam}) reads
\begin{equation}
H=\sum_{j\in\mathbb{Z}_N}\left[\frac{(s_{j+1}-s_j)^2}{2}+\phi(r_j)\right]\ ,
\end{equation}
whose equations of motion are
\begin{equation}
\label{eq:srjdot}
\begin{split}
		\dot{s}_j\ &=\frac{\partial H}{\partial r_j}=\phi'(r_j)\ ,\\
	         \dot{r}_j\ &=-\frac{\partial H}{\partial s_j}=s_{j+1}+s_{j-1}-2s_j\ .
	\end{split}
\end{equation}
\begin{remark}
The periodicity of the $q_j$ implies $\sum_{j\in\mathbb{Z}_N}r_j=0$, whereas the periodicity of the $s_j$ implies
$\sum_{j\in\mathbb{Z}_N}p_j=0$.
\end{remark}
We now assume the existence of a pair of analytic functions $R,S:\mathbb{T} \times \mathbb{R} \to \mathbb{R}$ such that
\begin{equation}
\label{eq:srSR}
	\begin{split}
			s_j(t)\;&=\; \dfrac{\sqrt{\varepsilon}}{h}S(x,\tau)|_{x=hj,\tau=ht} \, \\
			r_j(t) \;&=\; \sqrt{\varepsilon} R(x,\tau) |_{x=hj,\tau=ht} \, .
	\end{split}
	\ \ \ \ \ h:=\frac{1}{N}\ .
\end{equation}
Notice that the choice of the functions $R$ and $S$ is \emph{not} unique. For example, one can add to them any
linear combination of the form $\sum_{m\in\mathbb{Z}}c_m\sin(\pi mx/h)$, which vanishes at the lattice sites
$x=hj$.  Having in mind long-wavelength initial conditions, a natural choice consists in restricting $R$ and $S$ to the Fourier polynomials supported on the first few harmonics
at $\tau=0$, and in regarding the discrete system as a sampling of the continuous one at any $\tau>0$. This is allowed by the following proposition.
\begin{proposition}
Consider the Hamiltonian functional
\begin{equation}
\label{eq:ContinuousFPU}
	\mathcal{H}[S,R]=\oint \Big[\frac{1}{\varepsilon} \phi(\sqrt{\varepsilon} R(x,\tau))-\frac{1}{2} S(x,\tau) \Delta_h S(x,\tau) \Big] \, dx\ ,
\end{equation}
where 
\begin{equation}
\label{eq:Deltah}
	\Delta_h:= \frac{4}{h^2}\sinh^2\left(\frac{h}{2}\partial_x\right)= \partial_x^2+\frac{h^2}{12} \partial_x^4 + O(h^4)
\end{equation}
is the discrete Laplacian. Then, its Hamilton equations restricted to the the lattice coincide with the FPU equations
(\ref{eq:srjdot}). 
\end{proposition}
\begin{proof}
Considering $(S,R)$ as a canonical pair coordinate-momentum, one has
\begin{equation}
\label{eq:SRtau}
\begin{split}
S_\tau=&\frac{\delta\mathcal{H}}{\delta R}=\frac{1}{\sqrt{\varepsilon}}\phi'(\sqrt{\varepsilon}R)\ ,\\
R_\tau=&-\frac{\delta\mathcal{H}}{\delta S}=\Delta_h S\ .
\end{split}
\end{equation}
The latter equations, restricted to the lattice, i.e. to the points $x=hj$, coincide with those obtained by substituting
(\ref{eq:srSR}) into (\ref{eq:srjdot}). \qed
\end{proof}

One thus embeds the dynamics of the FPU lattice (\ref{eq:srjdot}) within that of the infinite-dimensional Hamiltonian system
(\ref{eq:SRtau}). The latter consists of a system of nonlinear dispersive Hamiltonian PDEs for any expansion to finite order of the discrete Laplacian (\ref{eq:Deltah}). Moreover, making use of the latter expansion and of the explicit expression 
\eqref{eq:PhiPotential} of $\phi$, one observes that the Hamiltonian \eqref{eq:ContinuousFPU} has the grading of Definition \ref{def:DefGrading} with 
\begin{equation}
\label{eq:fpugrad}
\lambda \sim \sqrt{\varepsilon} \sim h^2\ .
\end{equation}

Let us see in which sense KdV equation allows us to explain rigorously, in the case of the $\alpha$-chain, the FPU phenomenon, namely the fact that, if one low-frequency mode is initially excited, then the energy quickly flows to a small packet of modes whose energy, on time average, decreases exponentially with the mode index. The main result is conveniently formulated in terms of the 
quantities 
\begin{equation}
	\kappa\;:=\;\frac{k}{N}\ \ ;\ \ \mathcal{E}_\kappa \;:=\; \frac{E_k}{N}\ ,
\end{equation}
denoting the specific mode index (or wave number) and the corresponding specific energy, respectively. 
We are interested in the evolution of initial data supported on one harmonic mode of long wavelength, i.e. specific index
$\kappa_0=k_0/N\ll 1$.

\begin{theorem}[Bambusi-Ponno \cite{BamPo}]\label{thm:LocalisationFPU}
Consider an initial condition of the form
\begin{equation}
\mathcal{E}_{\kappa_0} (0)\; = \; C_0\mu^4 , \qquad \mathcal{E}_\kappa(0) =\;0 , \qquad \forall \kappa 	\neq \kappa_0 \, ,
\end{equation}
where $C_0$ is any fixed constant and $\mu:=\kappa_0:=k_0/N\ll 1$. 

Then, for any fixed time $T_f$ there exist positive
constants $\mu^*$, $\sigma$, $C_1$ and  $C_2$ (dependent on $C_0$ and $T_f$) such that, 
for all $\kappa$, $\mu<\mu^*$ and $|t|\leq T_f/\mu^3$
\begin{enumerate}[label={(\roman*)}]
	\item 
	\begin{equation}
	\mathcal{E}_\kappa(t)\; \leq\; \mu^4 C_1 e^{- \sigma \kappa/\mu}\ ;
	\end{equation}
	\item there exists a sequence of almost periodic functions $\{F_n(t)\}_{
n\in \mathbb{N}}$ and an associated specific sequence 
\begin{equation}
F_\kappa=\mu^4F_n\ ,\quad  \text{if}\quad  \kappa=n\kappa_0\ ;\quad F_\kappa =0 \quad \text{otherwise}\ ,
\end{equation}
such that
\begin{equation}
|\mathcal{E}_\kappa(t) - F_\kappa(t)| \;\leq\; C_2 \mu^5\ .
\end{equation}
\end{enumerate}

\end{theorem}

The proof of this theorem is based on the fact that a solution of the KdV equation with an analytic initial datum on the torus remains analytic for all times \cite{KaPo}. In particular, analyticity implies the exponential decay of Fourier coefficients, which in turn implies the exponential decay of the Fourier coefficients for the FPU system. 

On the other hand, technical difficulties arise when comparing the dynamics of the discrete system with the dynamics of the continuous one,  due to the contribution of the \emph{singular} remainder of the discrete Laplacian that contains higher-order derivatives. The latter problem is overcome by a combined use of the analyticity property of the KdV flow, closeness to the identity of the canonical transformation and Gr\"onwall lemma \cite{BamPo}.

However, when comparing the above result with the numerical simulations and with the recent results on relating the FPU dynamics to that of the Toda lattice \cite{BamMasToda}, one realises that it is not optimal: the time scale of closeness to the KdV dynamics numerically observed turns out to be longer than $t\sim\mu^{-3}\sim\varepsilon^{-3/4}$.  In fact, there is an actual hope to improve the latter result which rests on the fact that the normal form of the FPU problem is in the \emph{KdV hierarchy} not only to the first, but also to the second perturbative order. 
Then, an extension of Theorem \ref{thm:LocalisationFPU} could work with a second-order normal form transformation yielding the (presumably) optimal result of localisation of the Fourier spectrum on time-scales 
$\sim \mu^{5}\sim \varepsilon^{-5/4}$.

Within this context, we present below the normal form construction of the FPU problem, including the Kodama transformation.
 
\begin{proposition}\label{prop:FPUKDV}
The Hamiltonian \eqref{eq:ContinuousFPU} can be mapped into the normal form
	\begin{equation}
	\label{eq:hnffpu}
		\tilde{H}=K_0+Z_1+Z_2+\dots\ ,
	\end{equation}
	with
	\begin{eqnarray}
		K_0&=&\oint \frac{{\tilde{u}}^2+{\tilde{v}}^2}{2} \, dx		\label{eq:H0FPU}\\
		Z_1&=&\frac{h^2}{4! 2} \oint \left[\frac{4 \alpha \sqrt{2 \varepsilon}}{h^2}\big({\tilde{u}}^3+{\tilde{v}}^3 \big)+{\tilde{u}}{\tilde{u}}_{xx}+{\tilde{v}}{\tilde{v}}_{xx} \right] \, dx  \label{eq:Z1FPU}\\
		Z_2&=&{\textstyle\frac{3}{20} \frac{h^4}{(4!)^2}}\oint \Big[{\textstyle\left(\frac{\beta}{\alpha^2}-\frac{1}{2}\right) \frac{240 \alpha^2 \varepsilon}{h^4}}({\tilde{u}}^4+{\tilde{v}}^4)+{\textstyle\frac{20 \alpha \sqrt{2 \varepsilon}}{h^2}}\big({\tilde{u}}^2{\tilde{u}}_{xx}+{\tilde{v}}^2{\tilde{v}}_{xx} \big) \nonumber\\ &
		\,&+({\tilde{u}}_{xx})^2+({\tilde{v}}_{xx})^2 \Big] \, dx + \Big({\textstyle\frac{3 \beta \varepsilon}{8}-\frac{\alpha^2 \varepsilon}{4} }\Big) \Big( \oint {\tilde{u}}^2 \, d x \Big) \Big(\oint {\tilde{v}}^2 \, d x \Big) +\nonumber \\ &\,&+\frac{\alpha^2 \varepsilon}{32} \Big(\langle {\tilde{u}}^2 \rangle^2+\langle {\tilde{v}}^2 \rangle^2 \Big) \label{eq:Z2FPU}
	\end{eqnarray}
\end{proposition}

\begin{proof}
By introducing the Riemann variables 
\begin{equation}
\label{eq:uvFPU}
u:=\frac{S_x+R}{\sqrt{2}}\ \ ;\ \ v:=\frac{S_x-R}{\sqrt{2}}\ ,
\end{equation}
the result is actually a Corollary of Proposition \ref{prop:GenericNormalForm}, with the substitutions $\alpha_1=\frac{\alpha \sqrt{\varepsilon}}{6 \sqrt{2}}$, $\alpha_2=-\frac{h^2}{4! 2}$ and $\beta_1=\frac{\beta \epsilon}{4}$. \qed
\end{proof}
\begin{remark}
The equations of motion of $K_0+Z_1$ are those of two counter-propagating KdV equations, i.e. of the form (\ref{eq:ord2}), for any $\alpha$. On the other hand, the equations of motion of $K_0+Z_1+Z_2$ are not in the KdV hierarchy, i.e. in the form
(\ref{eq:ord4}), unless the special condition $\beta=5\alpha^2/6$ holds.
\end{remark}
In order to bring the continuous FPU equations of motion into the KdV hierarchy form (\ref{eq:ord4}), one must look for a suitable Kodama transformation, as sketched in Remark \ref{rem:Kodama}.
\cite{GalloneLM}.
\begin{proposition}
The Kodama transformation
	\begin{equation}
			\tilde u = w + g(w)\ \ ;\ \ \tilde v = z + g(z)\ ,
	\end{equation}
	where
	\begin{equation}
	\label{eq:gkodama}
	\begin{split}
		g(w):=&\frac{h^2}{4!}\left(\frac{7}{2}-\frac{9}{2}\frac{\beta}{\alpha^2}\right)w_{xx}+
		\frac{\alpha\sqrt{\varepsilon}}{\sqrt{2}}\left(\frac{13}{12}-\frac{3}{2}\frac{\beta}{\alpha^2}\right)
		\left(w^2-\oint w^2\ dx\right) + \\
		 -&\frac{1}{6}\left(w_x\partial_x^{-1}w-\oint w^2\ dx\right)\ ,
		\end{split}
	\end{equation}
maps the equations of motion of the Hamiltonian normal form (\ref{eq:hnffpu}) into the integrable KdV form (\ref{eq:ord4}).
\end{proposition}
\begin{proof}
The proof consists in a long, though direct computation. Details can be found in \cite{GalloneLM}. Observe that, according to the grading
(\ref{eq:fpugrad}), $g\sim \lambda$, which does not affect the first order normal form. \qed
\end{proof}

A natural question arises now, namely whether it is possible to construct a normal form transformation, including the Kodama procedure, conjugating the continuous FPU equations to those of the KdV hierarchy to perturbative orders higher than the second one.  This is an open problem for initial data generically supported on lower modes, but it has recently been addressed for initial data close to the traveling wave. In \cite{GPR} it is proved that for almost-traveling waves, the conjugation to the third-order works only if the parameters correspond to a curve in the space of parameters containing the Toda lattice.

In general, it is expected that the FPU normal form is in the KdV hierarchy to a finite perturbative order, depending on the model. This is easily seen by considering the family of generalised FPU-systems \cite{PonBam08} defined by a Hamiltonian of the form \eqref{eq:FPUHam} with
\begin{equation}
\label{eq:genFPU}
	\phi(z)=\frac{z^2}{2}+\frac{z^p}{p} \, , \qquad p \geq 3\ .
\end{equation}
Instead of fixing a model and going on with the perturbative order, we here consider how the first order normal form depends on the exponent $p$.
The Hamiltonian (\ref{eq:ContinuousFPU}) with potential (\ref{eq:genFPU}) reads
\begin{equation}
	H \;=\; \oint \Big[\frac{R^2}{2}+\gamma \varepsilon^{\frac{p-2}{2}} \frac{R^p}{p}+\frac{1}{2} (S_x)^2-\frac{h^2}{12}(S_{xx})^2 \Big] \, dx + O(h^4)\ .
\end{equation}
Passing to the $(u,v)$ variables (\ref{eq:uvFPU}), one gets $H=K_0+H_1$, where
\begin{eqnarray}
	K_0&=&\oint \frac{u^2+v^2}{2} \, dx\ ,\\
	H_1&=&\oint \left[\gamma \varepsilon^{\frac{p-2}{2}} \frac{(u-v)^p}{2^{p/2} p}-\frac{h^2}{24}\big((u_{x})^2 + 2 u_x v_x + (v_x)^2\big) \right] \, dx\ .
\end{eqnarray}
Averaging $H_1$ (using \eqref{eq:IntegralUV}) one computes the normal form $\tilde H=K_0+Z_1+\cdots$ of the system, where
\begin{equation}
	\begin{split}
	Z_1= \left\langle H_1\right\rangle\;&=\; \oint \frac{\gamma \epsilon^{\frac{p-2}{2}}}{2^{p/2} p} \big(u^p+(-v)^p\big)-\frac{h^2}{24} \big((u_x)^2+(v_x)^2 \big) \, dx \\
	& + \frac{\gamma \epsilon^{\frac{p-2}{2}}}{2^{p/2} p} \sum_{j=1}^{p-1} (-1)^j \binom{p}{j} \Big(\oint u^{p-j} \, dx \Big) \Big(\oint v^j \, dx \Big) \, .
	\end{split}
\end{equation}
For $p=3$ one finds that $K_0+Z_1$ is the Hamiltonian of two uncoupled KdV equations, as expected. For $p=4$, the so-called 
$\beta$-model, $K_0+Z_1$ is the Hamiltonian of two uncoupled modified KdV (mKdV) equations. Thus, the first order normal form is integrable for $p=3,4$. On the other hand, for $p\geq 5$ the first order normal form Hamiltonian is that of two generalised, nonintegrable KdV equations, that are also nonlinearly coupled for $p\geq6$. For this class of models the integrability of the normal form, and the consequent FPU phenomenon of energy localisation due to closeness to integrability, are lost to first order if the degree of nonlinearity is high enough ($p\geq5$). More than this, in \cite{PonBam08} it is suggested that the blow-up of solutions characterising the nonintegrable KdV equations might play a relevant role in the problem.

As a last point, we stress that the method of infinite-dimensional perturbation theory allows to analyse the FPU system, treated in Proposition \ref{prop:FPUKDV}, in the singular limit $h \to 0$ with fixed, small specific energy $\varepsilon$. 
Such a limit is justified on the short term, where dispersion is expected to play a minor role with respect to nonlinearity, which explains why the normal modes start to effectively share their energy. Taking the limit $h\to0$, at fixed $\varepsilon$, of the FPU terms (\ref{eq:H0FPU}), (\ref{eq:Z1FPU}) and (\ref{eq:Z2FPU}), one finds
\begin{equation}\label{eq:HamFPUNoDisp}
	H=K_0+Z_1+Z_2+\dots
\end{equation}
with
\begin{eqnarray}
	K_0&=&\oint \frac{u^2+v^2}{2} \, dx \, , \\
	Z_1&=& \frac{\alpha \sqrt{\varepsilon}}{2\sqrt{2}} \oint \frac{u^3+v^3 }{3} \, dx \, ,\\
	Z_2&=&\Big(\frac{\beta}{\alpha^2}-\frac{1}{2} \Big) \frac{\alpha^2 \varepsilon}{4} \oint \frac{u^4+v^4}{4} \, dx \, . \label{eq:LastNoDisp}
\end{eqnarray}
The equations of motion associated to this normal form Hamiltonian consist of a pair of uncoupled, generalised Burgers equations, whose solution displays a gradient catastrophe at a finite shock time $t_s$. It has recently been proved that
the Fourier energy spectrum of such a system displays a power law decay characterised by the universal exponent 
$-8/3$ exactly at $t_s$. Such a prediction fits very well the numerical spectrum of the FPU system \cite{GallonePonnoRuffo}. Of course, the dynamics on times longer than $t_s$ cannot be described in this limit and dispersive effects
must be re-included, in agreement with the grading (\ref{eq:fpugrad}).  

\subsection{Water waves}\label{subsec:WW}
Consider an ideal fluid occupying, at rest, the domain
\begin{equation}
	\Omega_{0,L}\;:=\; \big\{(x,z) \in [0,L] \times \mathbb{R} \,:\, -h<z<0 \big\} \, ,
\end{equation}
with $L>0$. We study the evolution of the free surface under the action of gravity, in the irrotational regime. Thus, given a periodic function $\eta:[0,L] \to \mathbb{R}$, we define the domain
\begin{equation}
	\Omega_{\eta,L} \;:=\;\big\{(x,z) \in [0,L] \times \mathbb{R} \, : \, -h<z<\eta(x) \big\} \, .
\end{equation}
Irrotationality makes it possible to describe the velocity of the fluid $u$ as gradient of a function called \emph{velocity potential} by $u=\nabla \phi$. This problem admits a Hamiltonian formulation \cite{Zak68,CS93,CS94} and the conjugated variables are the wave profile $\eta(x)$ and the trace of the velocity potential at the free surface:
\begin{equation}
	\psi(x) \;:=\; \phi(x,\eta(x)) \, .
\end{equation}
The Hamiltonian of the system is
\begin{equation}\label{eq:WaterWavesHam}
	H(\eta,\psi) \;=\; \oint \Big(\frac{1}{2}g \eta^2+\frac{1}{2} \psi G(\eta) \psi \Big) \, dx
\end{equation}
where $G(\eta)$ is the \emph{Dirichlet-to-Neumann operator} defined as follows. Given a function $\psi(x)$ and consider the boundary value problem 
\begin{eqnarray}
\Delta \phi &=&0 \, , \qquad (x,z) \in \Omega_{\eta,L} \\
\phi_z\Big|_{z=-h}&=&0 \\
\phi(0) &=& \phi(L) \label{eq:LateralBoundaryWW}\\
\phi \Big|_{z=\eta(x)} &=& \psi
\end{eqnarray}
and let $\phi$ be its solution. Then
\begin{equation}
	G(\eta)\psi\;:=\; \sqrt{1+\eta_x^2} \partial_n \phi\big|_{z=\eta(x)} =(\phi_z-\eta_x\phi_x)\big|_{z=\eta(x)}
\end{equation}
where $\partial_n$ denotes the derivative in the direction normal to $z=\eta(x)$.

We are interested in solutions of the form
\begin{equation}
	\eta(x)\;=\;\mu^2 h^3 \sqrt{2} \tilde{\eta}(\mu x) \, , \qquad \psi(x)\;=\; \mu \sqrt{2 g h} h^2 \tilde{\psi}(\mu x) \, , \qquad \mu=1/L \ll 1 \, ,
\end{equation}
that corresponds to a canonical transformation when rescaling time to
\begin{equation}
	\tilde{t} \;=\; \frac{t}{\mu \sqrt{gh}} \, 
\end{equation}
and the physical space becomes the torus of unitary length.

Note that the dependence on $\eta$ of the Dirichlet-to-Neumann operator causes the Hamiltonian \eqref{eq:WaterWavesHam} not to fall within the class of mechanical Hamiltonians of Subsection \ref{subsec:MechanicalHamiltonians}.

The small parameter of the theory is $\lambda = (h \mu)^2$. Expanding the Hamiltonian in $\lambda$ one gets\footnote{This step is far from being a trivial Taylor expansion as it involves the asymptotic expansion of the Dirichlet-to-Neumann operator (see \cite{BamWW} for details).}
\begin{equation}\label{eq:HamiltonianWWTBN}
	H\;=\;H_0+\lambda H_1 + \lambda^2 H_2 + O(\lambda^3)
\end{equation}
with $H_0$ being in the same form of \eqref{eq:H0FPU} but with renamed variables:
\begin{eqnarray}
	H_0 &=& \oint \frac{\tilde{\eta}^2+\tilde{\psi}_y^2}{2} \, dy \, ,\\
	H_1&=& \frac{1}{2} \oint \Big(-\frac{1}{3}\tilde{\psi}_{yy}^2+\sqrt{2} \tilde{\eta} \tilde{\psi}_y^2 \Big) \, dy \, ,\\
	H_2 &=& \frac{1}{2} \oint \Big(\frac{2}{15} \tilde{\psi}_{yyy}^2-\sqrt{2} \tilde{\eta} \tilde{\psi}_{yy}^2 \Big) \, dy \, .
\end{eqnarray}
Note that, the Hamiltonian contains terms with the product of $\tilde{\eta}$ and $\tilde{\psi}$ and thus doesn't fit the definition of \emph{mechanical Hamiltonian} given above. Anyway, as for the FPU problem, it is convenient to use characteristic variables $(u,v)$ defined as
\begin{eqnarray}
	\tilde{\eta}(y,t) &=& \frac{u(y,t)+v(y,t)}{\sqrt{2}} \, , \\
	\tilde{\psi}_y(y,t) &=& \frac{u(y,t)-v(y,t)}{\sqrt{2}}
\end{eqnarray}
we then obtain
\begin{eqnarray}
	K_0&=&\oint \frac{u^2+v^2}{2} \, dy \, ,\\
	H_1&=& \oint \Big(-\frac{1}{12}(u_y^2+v_y^2)+\frac{u^3+v^3}{4}+\frac{u_yv_y}{6}-\frac{u^2v+uv^2}{4} \Big) \, dy \, , \\
	H_2&=&\oint\Big( \frac{1}{2} \frac{u_{yy}^2+v_{yy}^2}{15}-\frac{1}{4} (u u_y^2+ v v_y^2)-\frac{1}{15} u_{yy} v_{yy} \\
	& \, &-\frac{1}{4} (uv^2_y-2 u u_y v_y+vu_y^2-2vu_yv_y) \Big) \, dy \, .
\end{eqnarray}
Applying the techniques of Theorem \ref{prop:PerturbationGeneral} one has

\begin{proposition}
	Within the normal form procedure outlined above, Hamiltonian \eqref{eq:HamiltonianWWTBN} can be mapped into the normal form
	\begin{equation}
		\tilde{H} \;=\; \tilde{K}_0+\lambda Z_1+ \lambda^2 Z_2+\dots
	\end{equation}
	with
	\begin{eqnarray}
		Z_1&=& \oint \Bigg[\frac{\tilde{u}^3+\tilde{v}^3}{4}-\frac{1}{12} \big(\tilde{u}_y^2+\tilde{v}_y^2\big)\Bigg] \, d y \label{eq:S1-water-waves} \, ,\\
		Z_2&=& \oint \Bigg[\frac{1}{64} \big(\tilde{u}^4+\tilde{v}^4 \big)+\frac{7}{48} \big(\tilde{u}^2 \tilde{u}_{yy}+\tilde{v}^2 \tilde{v}_{yy} \big) + \frac{29}{720} \big( \tilde{u}_{yy}^2+\tilde{v}_{yy}^2\big) \Bigg] \, d y \nonumber \\
		& &\quad + \frac{1}{8} \langle \tilde{u}^2 \rangle \langle \tilde{v}^2  \rangle \, .\label{eq:S2-water-waves}
	\end{eqnarray}
\end{proposition}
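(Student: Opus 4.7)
The plan is to invoke the Averaging Principle (Theorem \ref{prop:PerturbationGeneral}) directly, using that the unperturbed Hamiltonian is $H_0=K_0$ and that, in the $(u,v)$ Riemann variables, its flow $e^{s\mathcal{L}_{K_0}}$ acts as the pair of translations $(u(y),v(y))\mapsto(u(y+s),v(y-s))$. Since $K_0$ has period $1$ in $s$ when restricted to the spectrum of interest, all time averages reduce to the integrals $\int_0^1(\cdot)\,ds$ evaluated via Lemma \ref{lem:PropertiesIntegrals}. The Poisson bracket to be used is the Gardner one, $\{F,G\}=\oint(F_uG_u - F_vG_v)_x\,dy$ (writing $F_u:=\delta F/\delta u$ etc.), so that the homological equations make sense.

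First I would compute $Z_1=\langle H_1\rangle_0$. The terms of $H_1$ split into \emph{pure} ones in $u$ or $v$ alone ($u^3,v^3,u_y^2,v_y^2$) and \emph{mixed} ones ($u_yv_y$, $u^2v$, $uv^2$). By \eqref{eq:IntegralU} the pure terms pass through the average unchanged, while by \eqref{eq:IntegralUV} each mixed term yields a product of single averages; since $\oint u\,dy=\oint v\,dy=0$ (a consequence of the water wave momentum reduction), all mixed contributions vanish, leaving exactly \eqref{eq:S1-water-waves}. Next, using $G_1=\int_0^1 s\,e^{s\mathcal{L}_0}(H_1-Z_1)\,ds$, only the mixed part of $H_1$ contributes, and I would apply \eqref{eq:IntegralSUV} to obtain $G_1$ as a sum of integrals of the form $\oint u v\,dy$, $\oint v^2\,\partial_y^{-1}u\,dy$, and $\oint u^2\,\partial_y^{-1}v\,dy$ with explicit rational coefficients.

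The second-order piece then follows from $Z_2=\langle H_2+\tfrac12\{H_1-Z_1,G_1\}\rangle_0$. The computation of $\{H_1,G_1\}$ is routine but lengthy: I would write down $\delta G_1/\delta u$, $\delta G_1/\delta v$, $\delta H_1/\delta u$, $\delta H_1/\delta v$ (the latter containing $u_{yy}$, $v_{yy}$ and cubic terms), assemble the integrand $(\delta H_1/\delta u)\partial_y(\delta G_1/\delta u)-(\delta H_1/\delta v)\partial_y(\delta G_1/\delta v)$, and simplify by integration by parts. Upon averaging with \eqref{eq:IntegralU}--\eqref{eq:IntegralUV}, the only surviving contributions are those that are \emph{already} decoupled (pure in $u$ or pure in $v$, giving the density $-\tfrac{1}{64}(\tilde u^4+\tilde v^4)+\tfrac{5}{48}(\tilde u^2\tilde u_{yy}+\tilde v^2\tilde v_{yy})+\tfrac{19}{720}(\tilde u_{yy}^2+\tilde v_{yy}^2)$), together with the global cross term $-\tfrac18\langle\tilde u^2\rangle\langle\tilde v^2\rangle$ coming from the mixed $H_2$ density $\tfrac{1}{15}u_{yy}v_{yy}$ and from residual cross pieces in $\{H_1,G_1\}$ that produce quadratic-in-quadratic averages. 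Renaming $(u,v)\to(\tilde u,\tilde v)$ after the composition of the two Lie transforms $e^{-\lambda\mathcal{L}_1}$ and $e^{-\lambda^2\mathcal{L}_2}$ yields the stated \eqref{eq:S2-water-waves}.

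The main obstacle is the bookkeeping in the last step: one must carefully track how the nonlocal factors $\partial_y^{-1}$ inside $G_1$ interact with the derivatives from the Gardner bracket and with the $u_{yy}, v_{yy}$ terms in $H_1$ and $H_2$, and verify that all coupling terms between $u$ and $v$ collapse, under the averaging, either to zero (because $u,v$ have zero mean) or to the single global term $\langle\tilde u^2\rangle\langle\tilde v^2\rangle$ with the precise coefficient $-1/8$. The delicate cancellations that produce $19/720$ and $5/48$ are a consequence of specific arithmetic combinations of $-1/12$, $1/4$, $1/6$, $2/15$ from the densities of $H_1$ and $H_2$; no further structural ideas beyond those already used in Proposition \ref{prop:GenericNormalForm} are needed.
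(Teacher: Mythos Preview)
Your approach is essentially identical to the paper's: apply Theorem~\ref{prop:PerturbationGeneral}, compute $Z_1=\langle H_1\rangle_0$ and $G_1$ via Lemma~\ref{lem:PropertiesIntegrals}, take functional derivatives to form the Gardner bracket $\{H_1,G_1\}$, average, and combine with $\langle H_2\rangle_0$. One small inaccuracy in your narrative: the global cross term $-\tfrac18\langle\tilde u^2\rangle\langle\tilde v^2\rangle$ does \emph{not} receive a contribution from the $-\tfrac{1}{15}u_{yy}v_{yy}$ piece of $H_2$ (that term averages to $\oint u_{yy}\,dy\cdot\oint v_{yy}\,dy=0$); it comes entirely from $\tfrac12\langle\{H_1,G_1\}\rangle_0$, and the first term of $G_1$ involves $\oint u\,v_y\,dy$ rather than $\oint uv\,dy$.
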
	

\begin{proof} This result is proved computing normal form Proposition \ref{prop:GenericNormalForm}. 

\noindent
	\emph{First perturbative step:} We use \eqref{eq:IntegralUV} to average $H_1$ along the flow of $H_0$ obtaining the expression for $Z_1$ in \eqref{eq:S1-water-waves}. The Hamiltonian generating the canonical transformation can can be computed using \eqref{eq:IntegralSUV}:
	\[
		G_1\;=\; -\oint \bigg[\frac{1}{12}v_yu+\frac{1}{8}u^2 \partial_y^{-1} v - \frac{1}{8} v^2 \partial_y^{-1} u \bigg] \, dy \, .
	\]
	We can therefore compute the $L_2$-gradient of $G_1$ and of $H_1-Z_1$ obtaining
	\[
		\begin{split}
			\frac{\delta G_1}{\delta u} \;&=\; -\frac{1}{12} v_y - \frac{1}{4} u \partial_y^{-1} v - \frac{1}{8} \partial_y^{-1} v^2 \, ,\\
			\frac{\delta G_1}{\delta v} \;&=\; \frac{1}{12} u_y + \frac{1}{8} \partial_y^{-1} u^2 + \frac{1}{4} v \partial_y^{-1} u \, ,
		\end{split}
	\]
	\[
		\begin{split}
			\frac{\delta (H_1-Z_1)}{\delta u} \;&=\; - \frac{1}{6} v_{yy}-\frac{1}{2} uv -\frac{1}{4} v^2 \, , \\
			\frac{\delta (H_1-Z_1)}{\delta v} \;&=\; - \frac{1}{6} u_{yy}-\frac{1}{2} uv - \frac{1}{4} u^2 \, .
		\end{split}
	\]

\noindent	
	\emph{Second perturbative step:} We use \eqref{eq:IntegralUV} to average $H_2$ and $\{H_1-Z_1,G_1\}$ obtaining:
	\[
		\begin{split}
			\big\langle \{H_1,G_1\} \big\rangle_{0} \;&=\; \frac{1}{8} \oint \Big[\frac{1}{9}(u_{yy}^2+v_{yy}^2)+\frac{1}{3}(u^2 u_{yy} +v^2 v_{yy})+\frac{1}{4}(u^4+v^4) \Big] \, dy \\
			&\qquad \frac{1}{4} \langle u^2 \rangle \langle v^2 \rangle \, ,
		\end{split}
	\]
	\[
		\langle H_2 \rangle_0 \;=\; \oint \Big[\frac{u_{yy}^2+v_{yy}^2}{30}+\frac{1}{8} \big( u^2 u_{yy}+v^2 v_{yy} \big) \Big] \, dy	\, .
	\]
	We obtain $Z_2=\langle H_2 \rangle_0 + \frac{1}{2} \langle \{H_1,G_1 \} \rangle_0$ that is precisely \eqref{eq:S2-water-waves}.
\qed
\end{proof}
As for equations of the FPU lattice, these Hamiltonians are \emph{not} in the Korteweg-de Vries hierarchy. Exactly as in the previous case, Kodama's theory solves the problem and with a close-to-identity change of variables maps the Hamiltonian into:
\begin{equation}\label{eq:WWNF}
	H_{\mathrm{NF}}(u,v)\;=\;K_0(u)+\lambda K_1(u)+\lambda^2 c_2 K_2(u) +K_0(v)+\lambda K_1(v)+\lambda^2 c_2 K_2(v) \, 
\end{equation}
with $c_2$ being some explicit constant.

In case $\mu$ is a small free parameter not related to $L$ and the water waves are studied on the whole real line (that is, $x \in \mathbb{R}$ and thus, imposing $\lim_{x \to \infty} \phi(x)=0$ instead of $\phi(0)=\phi(L)$ in \eqref{eq:LateralBoundaryWW}), the following result holds
\begin{theorem}[Bambusi \cite{BamWW}] \label{thm:BamWW}
	For any $s'$, there exists $\lambda^*>0$ and $s,s''$, s.t., if $0<\lambda < \lambda^*$, then there exists a map $T_\lambda:B_1^s \to W^{s'',1} \times W^{s'',1}$, with the following properties
\begin{itemize}
	\item[(i)] $\sup_{(u,v) \in B_1^s} \Vert T_\lambda(u,v)-(u,v) \Vert_{W^{s'',1} \times W^{s'',1}} \leq C \lambda$,
	\item[(ii)] Let $I_\lambda$ be an interval containing the origin and $z(\cdot)=(u(\cdot),v(\cdot)) \in C^1(I_\lambda;B_1^s)$ be a solution of the Hamiltonian system \eqref{eq:WWNF} with $c_2=\frac{299}{389}$ define
	\begin{equation}
		z_a=(u_a,v_a)\;:=\;T_\lambda(u,v) \, .
	\end{equation}
	Then there exists $R \in C^1(I_\lambda,W^{s',2}\times W^{s',2})$ s.t. one has
	\begin{equation}
		\dot{z}_a \;=\; J \nabla H (z_a(t))+\lambda^3 R(t) \, \qquad \forall t \in I_\lambda \, ,
	\end{equation}
	where $H$ is the Hamiltonian of water waves problem in the variables $u$ and $v$.
\end{itemize}
\end{theorem}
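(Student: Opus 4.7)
The plan is to construct $T_\lambda$ as a composition of two close-to-identity transformations: first the canonical Lie-series normal form map from the preceding proposition, which brings the water-waves Hamiltonian into the form $K_0+\lambda Z_1+\lambda^2 Z_2+O(\lambda^3)$, and then a Kodama-type change of variables $\Psi_\lambda$ that further conjugates the truncated normal-form vector field to the one generated by the hierarchy Hamiltonian $H_{NF}$ in \eqref{eq:WWNF}. The scalar $c_2 = 299/389$ should emerge as the unique algebraic choice that makes the Kodama conjugation solvable at order $\lambda^2$; any other value would leave a non-removable resonant obstruction in the cohomological equation at that order.

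For the first step I would take the normal-form Lie transform $\mathcal{T}_\lambda:=e^{\lambda^2\mathcal{L}_{G_2}}e^{\lambda\mathcal{L}_{G_1}}$ with $G_1,G_2$ as computed in the previous proof, realize it as an actual map between Sobolev spaces rather than a formal series, and bound the truncation error. Because $G_1$ contains only one derivative and a $\partial_y^{-1}$ acting on zero-mean functions, and $G_2$ involves differential polynomials of bounded order, the associated Hamiltonian flows are well-posed on $H^s$ for large $s$ with a controlled derivative loss at each application, and one obtains $\sup_{z\in B_1^s}\|\mathcal{T}_\lambda(z)-z\|_{W^{s'',1}\times W^{s'',1}}\le C\lambda$. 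Starting from the transported Hamiltonian $K_0+\lambda Z_1+\lambda^2 Z_2$, I would then seek $\Psi_\lambda=I+\lambda\psi_1+\lambda^2\psi_2+O(\lambda^3)$ with $\psi_j$ local differential polynomials in $\tilde u$ or $\tilde v$ separately (the two components being already decoupled at this stage), determined order by order by imposing that $\Psi_\lambda$ intertwines the two vector fields up to $O(\lambda^3)$. At order $\lambda$ the choice $Z_1=K_1$ trivializes the equation; at order $\lambda^2$ the $\psi_2$-equation is solvable provided the coefficient in front of $K_2$ equals $c_2=299/389$, fixing that constant. Setting $T_\lambda:=\mathcal{T}_\lambda^{-1}\circ\Psi_\lambda^{-1}$ gives property (i) by composition.

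For property (ii) I would write out $\dot z_a=DT_\lambda(z)\cdot\dot z$ with $\dot z=J\nabla H_{NF}(z)$, and compare with $J\nabla H(z_a)$ term by term in $\lambda$. By construction of $T_\lambda$ the contributions at orders $\lambda^0,\lambda^1,\lambda^2$ cancel; everything that is left can be organized into a single residual field $R(t)$ of order $\lambda^3$, involving the $O(\lambda^3)$ tails dropped both in the Hamiltonian expansion and in the Lie-series truncation, together with commutators of $\psi_1,\psi_2,G_1,G_2$ that accumulate at order $\lambda^3$. The main obstacle is the quantitative Sobolev bookkeeping: each Lie bracket or Kodama term costs a fixed number of derivatives, so one must start in $H^s$ with $s$ sufficiently larger than $s''$, which in turn is larger than $s'$, and use tame product estimates together with the a priori fact that $z(t)$ stays in $B_1^s$ on $I_\lambda$, in order to bound $\|R(t)\|_{W^{s',2}\times W^{s',2}}$ uniformly on the time interval. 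The closing estimate is where the analytic work is concentrated; the algebraic identification of $c_2$, by contrast, is a single linear calculation dictated by the mismatch between $Z_2$ and $K_2$.
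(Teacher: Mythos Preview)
The paper does not supply a proof of this theorem: it is stated as a quoted result from Bambusi \cite{BamWW}, with the surrounding text only indicating the mechanism (``Kodama's theory solves the problem and with a close-to-identity change of variables maps the Hamiltonian into'' the hierarchy form \eqref{eq:WWNF}, ``see \cite{BamWW} for details''). So there is no in-paper proof to compare against.

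That said, your outline matches the strategy the paper sketches and that \cite{BamWW} actually carries out: compose the canonical Lie normal-form transformation producing $K_0+\lambda Z_1+\lambda^2 Z_2$ with a Kodama-type map that rectifies $Z_2$ onto a multiple of the second KdV integral, the coefficient $c_2$ being fixed by the solvability of the order-$\lambda^2$ cohomological equation; then read off (i) from the $O(\lambda)$ closeness of each factor and (ii) from the fact that the construction matches the two vector fields through order $\lambda^2$. Your identification of the direction of $T_\lambda$ (from normal-form solutions back to approximate water-wave solutions, hence the inverse composition) is also correct.

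The one point where your sketch is optimistic is the analytic side. The Kodama map is \emph{not} canonical, so you cannot package all estimates as Hamiltonian-flow bounds; and the generators $G_1,G_2$ contain $\partial_y^{-1}$ acting on products, so the derivative bookkeeping is slightly more delicate than ``each bracket costs a fixed number of derivatives''. In \cite{BamWW} this is handled via a Galerkin truncation/averaging framework (in the spirit of \cite{BamPisa}) rather than by directly solving the Lie ODEs in Sobolev space; that is where the precise loss-of-regularity chain $s>s''>s'$ and the $W^{s',2}$ control on $R$ come from. Your proposal gestures at tame estimates but does not indicate how to avoid the usual loss-of-derivatives obstruction when exponentiating $\mathcal{L}_{G_j}$, which is the genuine technical content of the cited proof.
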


An interesting non-trivial dynamical information one can obtain from this Theorem concerns the goodness of the approximation of the normal-form dynamics. That is, for smooth enough initial data, it is possible to go back to the original non-scaled variables and to get the estimate on the wave profile
\begin{equation}
	\sup_{|t| \leq T^*/{\mu^3 \sqrt{gh}}} \Vert \eta(t)-\eta_a(t) \Vert_{L^\infty} \leq C\mu^6 \, .
\end{equation}

Note that the difference between wave profiles can be proved to be small only for times in which the second perturbative correction is negligible. Thus, as for the FPU system, an interesting open problem is the understanding which results can hold for larger time scales.

We are confident that these two results can be proved also in the periodic setting presented above.

\section{Conclusions and open problems}

In the framework of Hamiltonian field theory, the continuum limit of the FPU chain for long-wavelength excitations and the Hamiltonian of water waves belong to the same wider class of perturbations of the wave equation. This is not the case of other lattice models, such as the Klein-Gordon, for which one has to take into account the presence of the mass term.

Recently, the analysis of lattice model using the machinery of water waves has received a certain interest especially for systems in two spatial dimensions \cite{PelinovskiHir,GaPa} or for the analysis of higher-order normal forms for one-dimensional systems \cite{GPR}. 

As a comparison, water waves are now a hot topic in research. The main goals in the field are results on well-posedness as well as regularity result for solutions or existence of quasi periodic or traveling wave solutions (see e.g. the recent results \cite{BBHM,FeolaGiuliani,BertiFranzoiMaspero,Berti2021}).

In this sense, many open questions remain open and can hopefully be addressed in the next future:
\begin{itemize}
	\item The analysis at second order performed in Subsec. \ref{subsec:FPU}  does not allow us to conclude that the dynamics of the integrable system is close
to the dynamics of the original system. Actually, it is known how to obtain a result on the dynamics, but only over times over which the effects of the second-order term is invisible. One of the open major problems is to understand how to go beyond the time scale of Theorem \ref{thm:LocalisationFPU}.
\item From the point of view of statistical physics, the regime on which Theorem \ref{thm:LocalisationFPU} is proved is not significant as the specific energy of the system $\varepsilon \sim 1/N^4$. The thermodynamic limit would require $\varepsilon$ to be constant and independent of the size of the system. This is read, in terms of the normal form construction, as a zero-dispersion limit of the Korteweg-de Vries equation. It would be interesting to study the effect of this limit.
\item Last, small attention has been given to the analysis of the FPU model when the dispersion is neglected (see \cite{PRK-PRE}). An interesting question to address would be if equations \eqref{eq:HamFPUNoDisp}-\eqref{eq:LastNoDisp} can be used to explain some properties of the dynamics, especially for short time scales, low Fourier modes or in the regime of high specific energy.
\end{itemize}

\begin{acknowledgement}
The authors are indebted to D. Bambusi and B. Rink for the interesting discussions which took place in the course of many years of fruitful interactions and collaborations on the subject. 
\end{acknowledgement}

\bibliographystyle{plain}
\bibliography{biblio_chapter}

\end{document}